\documentclass[%
 reprint,
 amsmath,amssymb,
 aps,
 pra,
]{revtex4-2}

\usepackage{xcolor}
\usepackage{amsmath}
\usepackage{amsthm} 
\usepackage{pgfplots}
\usepackage[hidelinks=true]{hyperref}
\usepackage{xurl}
\usepackage{tikz}
\usetikzlibrary{quantikz2}
\usepackage{pgfplotstable} 
\pgfplotsset{compat=1.18}
\usepgfplotslibrary{groupplots} 

\newtheorem{theorem}{Theorem}[section]
\newtheorem{lema}[theorem]{Lemma}

\newcommand{\ketbra}[1]{\left|#1\right>\left<#1\right|}
\usepackage{graphicx}
\usepackage{dcolumn}
\usepackage{bm}
\usepackage{cleveref}

\begin{document}

\preprint{APS/123-QED}

\title{Cloning Encrypted Quantum States in Arbitrary Dimensions}

\author{Filip-Ioan Ceară}
\email{filipceara@gmail.com}
\affiliation{%
 Advanced Technologies Institute, 10 Dinu Vintilă, Bucharest, Romania
}%




\date{\today}

\begin{abstract}
Recently, Yamaguchi and Kempf  
proved that encrypted qubits can be cloned. In this work, we generalize the encrypted cloning protocol and prove that it also applies to higher-order quantum systems. Given that a straightforward generalization of the protocol using the exponential of the shift and phase operators fails to satisfy the unitary requirement for a quantum gate, we propose a different approach. We introduce a new operator to be used in the encryption process and show that it is unitary. We adapt the decryption operator from the reference paper to fit in the framework of multi-level quantum systems. We analyze the circuit implementation of the proposed operators and show that the overhead imposed by larger dimensions scales linearly with qudit dimension.
\end{abstract}

\maketitle


\section{\label{sec:introduction}Introduction}
Generalizing qubit algorithms or protocols to qudits is a challenging task in quantum information theory~[\onlinecite{Wilde_2013}]. The main advantage of multi-dimensional quantum states can especially be observed in the field of quantum communication~[\onlinecite{doi:10.1126/sciadv.1701491}].
A qudit can implement more efficient quantum communication protocols because it has an inherently larger alphabet size and is more robust to noise~[\onlinecite{Ionicioiu2016}]. This increased ``noise threshold'' makes qudit-based systems more suitable for long-distance communication
Furthermore, multi-qudit systems allow more complex entanglement structures, thus enabling more sophisticated secret-sharing schemes~[\onlinecite{PhysRevA.92.030302}].

Recently, Yamaguchi and Kempf~[\onlinecite{y4y1-1ll6}] presented a protocol that allows the cloning of encrypted states of a qubit, without violating the no-cloning theorem. We solve one of the open problems stated in their paper, showing that the cloning of encrypted states is not restricted to qubit-based systems. Thus, we establish that this protocol is intrinsic to quantum information encoded in finite dimensions. This result highlights the protocol's scalability and relevance. Thus, one can leverage the protocol's scalability and the advantages of qudit systems to propose new schemes in quantum cryptography.

In this work, we discuss the fact that the natural generalization of the encryption operator, as the exponential of a generalized Pauli operator, cannot be performed for \(d\geq 3\). Thus, we propose a generalization of the encryption operator based on a constant-amplitude zero-autocorrelation (CAZAC) sequence. For the decryption matrix, we generalize the form given in~[\onlinecite{y4y1-1ll6}] to the qudit case. We show that both of the proposed matrices are unitary. We demonstrate that, within the protocol, the proposed matrices meet the imposed requirements. Thus, the encryption operator creates a state indistinguishable from a maximally mixed state, while the decryption operator successfully retrieves the information originally stored in the qudit of interest, as detailed in the protocol. 

The structure of the article is the following.
In Section \ref{sec:Preliminaries}, we define the problem statement and introduce the notations that will be used throughout the paper. In Section \ref{sec:analysis}, we perform the analysis of the proposed matrices for encryption and decryption operations. The methods for implementing the unitary matrices are analyzed in Section \ref{sec:impl}, where we also evaluate the gate complexity and assess the number of one- and two-qudit gates required to implement the protocol. Finally, we summarize the work in Section \ref{sec:concl}. All the proofs of the statements that we have used during this work are detailed in the appendices.

\section{Preliminaries}

\label{sec:Preliminaries}
A qudit represents a multi-dimensional quantum state that generalizes the qubit (which can be represented mathematically as a vector in a two-dimensional Hilbert space). Let \(\mathcal{H}_d\) be a Hilbert space with dimension \(d = \dim\mathcal{H}_d\). Thus, any qudit \(\ket{\psi} \in \mathcal{H}_d\) can be written as \(\ket{\psi} = \sum_{i=0}^{d-1} \alpha_i\ket{i}\), using the generalized computational basis \(\{\ket{0},\ket{1},\dots,\ket{d-1}\}\)~[\onlinecite{Weyl1950}].

We define the generalized Pauli operators, often called Weyl operators, in the computational basis as:
\begin{eqnarray}
    \label{X_operator}  
    X_d\ket{k} = \ket{k \oplus_d 1}\\
    \notag\\
    \label{Z_operator}
    Z_d\ket{k} = \omega^{k}\ket{k},
\end{eqnarray}
where \(\omega = e^{\frac{2\pi i}{d}}\) is a root of unity of order \(d\) and \(\oplus_d\) represents the addition modulo \(d\). Respectively, the notation \(\ominus_d\) is equivalent to subtraction modulo \(d\). The \(X_d\) is also called the shift operator because it acts as a permutation in the computational basis, while the \(Z_d\) is often named the phase operator. It is clear that \(X_d^d = Z_d^d = I_d\), where \(I_d\) represents the identity matrix of dimension \(d\). It can be easily shown that \(X_d^{\dagger} = X_d^{-1}\) and \(Z_d^{\dagger} = Z_d^{-1}\). Thus, unlike the case of \(d=2\), where \(X^{-1} = X\), and \(Z^{-1}=Z\), for any \(d\geq 3\) the generalized Pauli operators are no longer hermitian. 

The Fourier transform \(F\) maps the two generalized Pauli operators
\begin{equation}
\label{XZ_TRANS}
    X_d = F^{\dagger}Z_dF,
\end{equation}
and is defined in~[\onlinecite{Nielsen_Chuang_2010}] as
\begin{equation}
    F\ket{k} = \frac{1}{\sqrt{d}}\sum_{j=0}^{d-1}\omega^{jk}\ket{j}.
\end{equation}

For an arbitrary one-qudit gate \(U\), we can define two types of controlled gates. The first one is defined as the \(C(U)\) gate~[\onlinecite{Ionicioiu2016}], acting on the control qudit \(\ket{j}\) and the target qudit \(\ket{k}\), as
\begin{equation}
    \label{eq:def_CU}
    C(U) \ket{j}\ket{k} = \ket{j}U^{j}\ket{k}.
\end{equation}
The second is the $p$-controlled gate, which is denoted as \(C_p(U)\) and is acting as
\begin{equation}
    \label{k_controlled}
    C_p(U)\ket{j}\ket{k} = \ket{j}U^{j\delta_{p,j}}\ket{k}.
\end{equation}
In this case, the quantum gate \(U\) is applied to the target qudit only when \(j=p\), as specified by the Kronecker-delta function \(\delta_{j,p}\).
Using the definition in \eqref{eq:def_CU}, the generalized controlled Pauli operators \(C(X_d)\) and \(C(Z_d)\) are
\begin{eqnarray}
    \label{def_CXd}
    C(X_d)\ket{j}\ket{k} = \ket{j}\ket{k \oplus_d j},\\\notag\\
    C(Z_d)\ket{j}\ket{k} = \omega^{jk}\ket{j}\ket{k}.
\end{eqnarray}

The quantum state
\begin{equation}
    \label{Gen_Bell}
    \ket{\Phi_d} = \frac{1}{\sqrt{d}}\sum_{p=0}^{d-1}\ket{p}\ket{p}
\end{equation}
represents the generalized Bell state and could be easily obtained by applying the \(F\) and \(C(X_d)\) gates, as 
\begin{equation}
    C(X_d)\cdot(F\otimes I_d)\ket{00} = \ket{\Phi_d}.
\end{equation}

The generalization that we are proposing for the encryption cloning protocol, which is developed in~[\onlinecite{y4y1-1ll6}], assumes the existence of a data qudit \(\ket{\psi}_A\) and \(n\) pairs of maximally entangled states \(\ket{\psi}_{S_i,N_i} = \ket{\Phi_d}\). The \(S_i\) qudits represent those used to clone the encrypted state, while the \(N_i\) qudits are stored locally and used to retrieve the initial information stored in the data qudit. In the encryption process, a unitary gate will be applied to the data qudit \(A\) and to all the \(S_i\) qudits. The application of the encryption matrix should produce a quantum state that prevents any measurement of any of the \(S_i\) qudits from revealing information about the data qudit. Thus, the qudits \(S_i\) could be shared among \(n\) parties, while keeping all \(N_i\) qudits locally, where the data qudit originated. After choosing one of the qudits \(S_i\), applying the decryption operator to that qudit and all of the \(N_i\) qudits allows a single party to have access to the original quantum state stored in the data qudit, while all of the remaining \(n-1\) parties obtain only a maximally mixed state.

\section{\label{sec:analysis}Operator definitions}
In this section, we provide definitions of the encryption and decryption operators and prove that both are indeed unitary matrices. Moreover, we show that both matrices satisfy the functional requirements stated in Section \ref{sec:Preliminaries}.
\subsection{Encryption operator}
Similar to~[\onlinecite{y4y1-1ll6}], we define the operators \(P_X\) and \(P_Z\) 
\begin{eqnarray}
    \label{PX_operator}
    P_X = X_d^{\scriptscriptstyle(A)} \otimes X_d^{\scriptscriptstyle(S_1)} \otimes \dots \otimes X_d^{\scriptscriptstyle(S_n)}\\
    \notag\\
    \label{PZ_operator}
    P_Z = Z_d^{\scriptscriptstyle(A)} \otimes Z_d^{\scriptscriptstyle(S_1)} \otimes \dots \otimes Z_d^{\scriptscriptstyle(S_n)},
\end{eqnarray}
where the \(X_d^{(i)}\), and \(Z_d^{(i)}\) are the generalized Pauli operators described in \eqref{X_operator}, and \eqref{Z_operator} respectively. We use the notation \(X_d^{(i)}\) to illustrate the application of the \(X_d\) to the qudit labeled with \(i\). We express the encryption gate as in the following expression:
\begin{equation}
    \label{u_enc_operator}
    U_{enc}^{(n)} = V(P_X)V(P_Z),
\end{equation}
using a generic map \(V\). Therefore, we define the required properties of the operator \(V(P)\), such that the encrypted state in each of the \(\ket{\psi}_{S_i}\) reveals no information about the original state of the data qudit. The natural generalization of the operator \(V(P)\) is the one described in the following equation 
\begin{equation}
    \label{exponential}
    V(P) = e^{-i\theta P},
\end{equation}
which, for the case of \(\theta = \frac{\pi}{4}\), results in the same form of the operators as in the original paper~[\onlinecite{y4y1-1ll6}]. For the case of \(d = 2\) the matrices \(X_2\) and \(Z_2\) are hermitian matrices, resulting that the operator in \eqref{exponential} can be written as
\begin{equation}
    \label{exponential_d2}
    V(P) = \cos{\theta}\cdot I_2 -i\sin{\theta}
    \cdot P.
\end{equation}
For the case of \(d\geq 3\), due to the property of \(X_d\) and \(Z_d\) not being hermitian matrices, the operator described in \eqref{exponential} yields a non-unitary matrix~[\onlinecite{Nielsen_Chuang_2010}]. Thus, this natural generalization cannot be used for qudits.

To achieve the condition that the operator is unitary, we propose the operator
\begin{equation}
    \label{new_operator}
    V(P) =\frac{1}{\sqrt{d}}\sum_{k=0}^{d-1}c(k)P^{k},
\end{equation}
where the \(c(k)\) represents a Chu Sequence, a particular case of a Zadoff-Chu Sequence~[\onlinecite{1054840}]. All Zadoff-Chu sequences are CAZAC, meaning they have a perfectly flat power spectrum and their periodic autocorrelation is a Kronecker-delta function. The general expression for this type of sequence is given in the following equation
\begin{equation}
    \label{zadoff-chu}
    zc(k) = \exp\Big({-i\frac{\pi uk(k+c_f+2q)}{d}}\Big),
\end{equation}
with the requirement that \(c_f = d\pmod2\), and \(gcd(u, d)=1\). The \(c(k)\) sequence is obtained by choosing the integer parameters \(u=1\) and \(q=0\). Thus, we obtain the following operator that will be used in the encryption gate
\begin{equation}
    \label{final_operator}
    V(P) = \frac{1}{\sqrt{d}}\sum_{k=0}^{d-1}e^{-i\frac{\pi k(k+d\%2)}{d}}P^k,
\end{equation}
where we have defined \(\%\) as the remainder of division by 2.
Using the expression of the the operator \(V(P)\) as expressed in \eqref{final_operator}, with  \(P_X\), and \(P_Z\), as described in \eqref{PX_operator} and \eqref{PZ_operator}, respectively, the encryption matrix has the following expression
\begin{equation}
    \label{definition_enc_mine}
    U_{enc_{d}}^{(n)} = \frac{1}{d}\sum_{k=0}^{d-1}e^{-i\frac{\pi k(k+d\%2)}{d}}{(P_X)}^{k}\sum_{l=0}^{d-1}e^{-i\frac{\pi l(l+d\%2)}{d}}{(P_Z)}^{l}.
\end{equation}
Following the form of the encryption gate in \eqref{definition_enc_mine}, expressing the operators \(P_X, P_Z\) by indicating the qudit they are being applied to, and using the notation \(c_{kl} = c(k)\cdot c(l) = e^{-i\frac{\pi k(k+d\%2)}{d}}\cdot e^{-i\frac{\pi l(l+d\%2)}{d}}\), we obtain the following expression
\begin{widetext}
\begin{equation}
    \label{definition_enc_mine_final}
    U_{enc_{d}}^{(n)} = \frac{1}{d}\sum_{k,l=0}^{d-1}c_{kl} ({X_{ d}^{(A)}})^k({Z_{ d}^{(A)}})^{l} \otimes \bigotimes_{i=1}^{n}({X_d^{(S_i)}})^{k}({Z_d^{(S_i)}})^{l}.
\end{equation}
\end{widetext}
It can be easily seen that, for each of the coefficients \(c_{kl}\), \(|c_{kl}|=1\), thus resulting in an equal superposition of operators. One property of the Zadoff-Chu sequences is that they exhibit orthogonality between cyclically shifted versions of themselves~[\onlinecite{1054840}]. Thus, each sequence of coefficients \(c(k)\) exhibits similar properties to those of white noise, as the autocorrelation is a Kronecker-delta function. It has been shown that the Kronecker product of two CAZAC sequences is also a CAZAC sequence~[\onlinecite{144727}]. Thus, when using the coefficients \(c_{kl}\), we are obtaining a sequence whose autocorrelation is equal to \(0\) everywhere except the origin. We can observe the behavior in \Cref{fig:2d_ZC}, and conclude that the coefficients that we use are similar to samples extracted from a pseudorandom generator, and can successfully be used for the encryption operator. 
\begin{figure}[!b]
    \centering
    \includegraphics[width=0.95\linewidth]{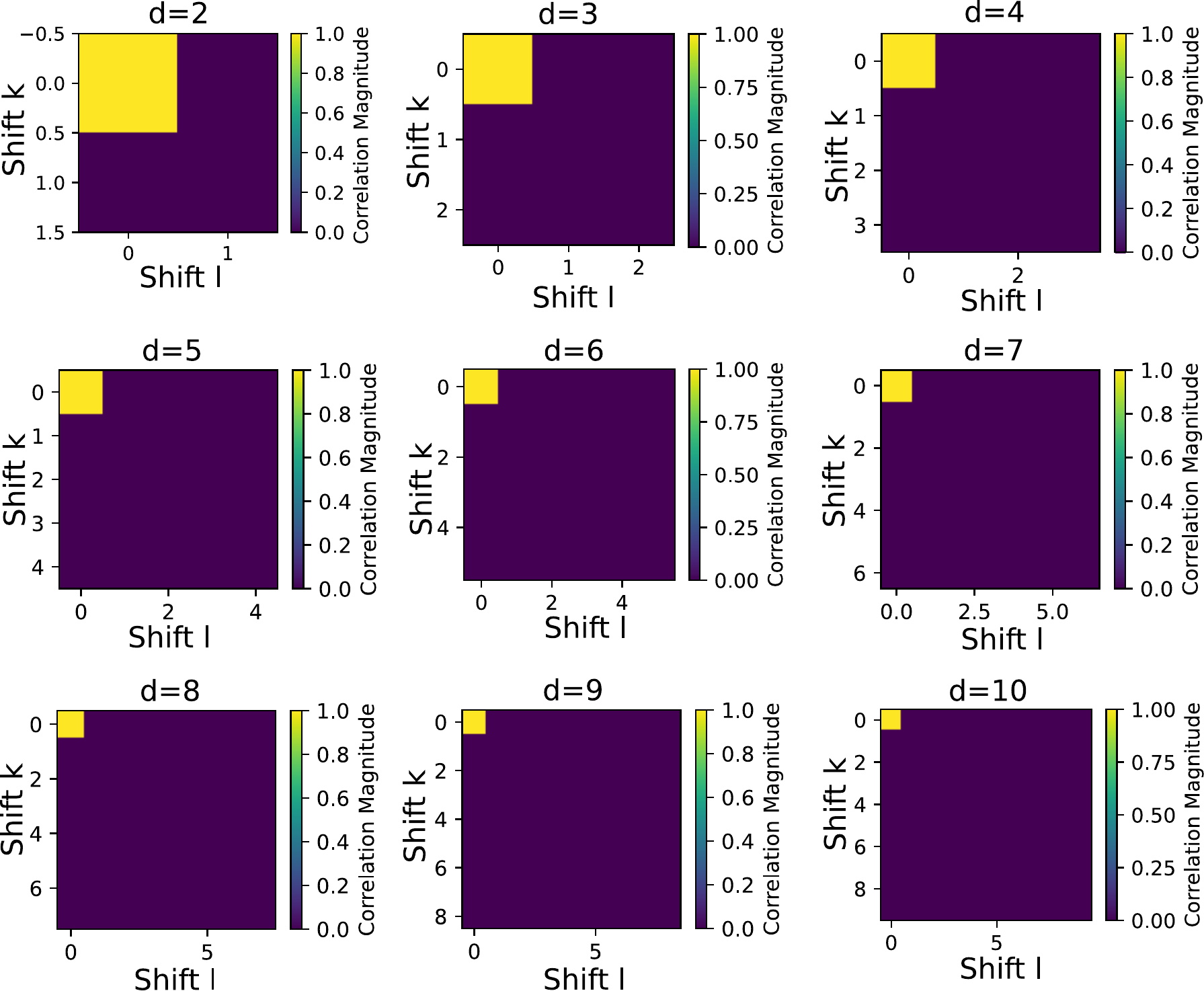}
    \caption{2D Autocorrelation for \(c_{kl}\) coefficients, for different values of the qudit dimension. All the plots show a maximum value of \(1\) at \((0,0)\) and the value \(0\) for all the other combinations.}
    \label{fig:2d_ZC}
\end{figure}

We further show that the proposed operator from \eqref{definition_enc_mine_final} is a consistent generalization because it produces the same expression as that given in the original paper. Thus, for the case of \(d=2\), the coefficients \(c_{kl}\) have the following values \(c_{00} = 1, c_{01}=-i, c_{10}=-i, c_{11}=-1\), and thus the encryption unitary is given as
\begin{align*}
U_{enc_{2}}^{(n)} = \frac{1}{2} \Big(I_{2}^{\scriptscriptstyle(A)} \otimes \bigotimes_{i=1}^{n}I_2^{\scriptscriptstyle(S_i)} + (-i)\cdot Z_{2}^{\scriptscriptstyle(A)}\otimes \bigotimes_{i=1}^{n}Z_2^{\scriptscriptstyle(S_i)} \\
+ (-i)\cdot X_{2}^{\scriptscriptstyle(A)}\otimes \bigotimes_{i=1}^{n}X_2^{\scriptscriptstyle(S_i)} + \\ (-1)(-i)^{n+1}\cdot Y_{2}^{\scriptscriptstyle(A)}\otimes \bigotimes_{i=1}^{n}Y_2^{\scriptscriptstyle(S_i)} \Big),
\end{align*}
resulting in the same operator as the one given in \cite{y4y1-1ll6}. 

Recalling from \eqref{definition_enc_mine}, the encryption operator can be written as the product of two matrices \(A_X\) and \(A_Z\), as shown
\begin{align*}
{U_{enc_d}^{(n)}} &= \frac{1}{\sqrt{d}}\sum_{k=0}^{d-1}e^{-i\frac{\pi k(k+d\%2)}{d}}P_X^{k}\frac{1}{\sqrt{d}}\sum_{l=0}^{d-1}e^{-i\frac{\pi l(l+d\%2)}{d}}P_Z^{l}\\
&=A_X \cdot A_Z.
\end{align*}
To prove that \(U_{enc_d}^{(n)}\) is unitary, we show that each of the matrices \(A_X\) and \(A_Z\) is unitary. Each of the two matrices is the sum of \(d-1\) unitary terms; thus, it is not trivial that the matrix is unitary. The proof that each of the matrices \(A_X\) and \(A_Z\) is unitary is given in Appendix \ref{unitary_appendix_enc} and is based on the orthogonality of terms in the Zadoff-Chu sequence. As a consequence, the matrix could be implemented as a gate and applied to a quantum circuit.

\subsection{Decryption operator}
For the decryption matrix, we follow the construction mechanism from~[\onlinecite{y4y1-1ll6}], but adapt it for any dimension \(d\geq3\). In the original paper, the authors used the following result
\begin{equation}
    \Big(\sigma_\mu^{(N_i)}{}^{\mathsf{T}}\otimes \sigma_\mu^{(S_i)}\Big) \ket{\Phi_2} = \ket{\Phi_2}, \forall\mu\in \{0,\dots,3\},
\end{equation}
where the operators \(\sigma_\mu\) represent the Pauli operators \(\{I_2, X_2, Y_2, Z_2\}\). For \(d\geq 3 \), this result does not apply; thus, we propose the following theorem, whose proof is provided in Appendix \ref{proof_th2}.
\begin{theorem}
\label{th_2}
    The following identity
    \[
    \Big({X_d^{\scriptscriptstyle(Q_1)}}^{k_1}{Z_d^{\scriptscriptstyle(Q_1)}}^{-k_2} \otimes {X_d^{\scriptscriptstyle(Q_2)}}^{k_1}{Z_d^{\scriptscriptstyle(Q_2)}}^{k_2}\Big)\ket{\Phi_d}=\ket{\Phi_d}
    \]
    holds for any \(d\geq 2\) and \(k_1,k_2 \in \{0\dots d-1\}\), where \(Q_1, Q_2\) represent the two qudits involved.
\end{theorem}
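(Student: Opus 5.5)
The plan is a direct computation in the computational basis. Both operators are products of generalized Pauli operators, and each acts on basis states in a simple way: $Z_d$ multiplies $\ket{p}$ by a phase, and $X_d$ shifts the label. So I will apply the operator in the statement term by term to the expansion \eqref{Gen_Bell} of $\ket{\Phi_d}$. Since the operator is a tensor product, it suffices to track what happens to each summand $\ket{p}\ket{p}$.

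\textbf{Step 1 (phase part).} First I act with ${Z_d^{\scriptscriptstyle(Q_1)}}^{-k_2}\otimes {Z_d^{\scriptscriptstyle(Q_2)}}^{k_2}$, which is applied before the shifts because of the operator ordering. By \eqref{Z_operator}, ${Z_d}^{m}\ket{p}=\omega^{mp}\ket{p}$ for any integer $m$, including negative $m$, since $Z_d^{-1}=Z_d^{\dagger}$ is diagonal with entries $\omega^{-p}$. Hence
\begin{equation*}
\ket{p}\ket{p}\mapsto \omega^{-k_2p}\omega^{k_2p}\ket{p}\ket{p}=\ket{p}\ket{p}.
\end{equation*}
So this factor fixes every summand, and therefore fixes $\ket{\Phi_d}$. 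This is exactly the role of the opposite signs of the exponents $-k_2$ and $k_2$: the two phases cancel.

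\textbf{Step 2 (shift part).} Next I apply ${X_d^{\scriptscriptstyle(Q_1)}}^{k_1}\otimes {X_d^{\scriptscriptstyle(Q_2)}}^{k_1}$. By \eqref{X_operator}, each summand is sent to $\ket{p\oplus_d k_1}\ket{p\oplus_d k_1}$. The map $p\mapsto p\oplus_d k_1$ is a bijection of $\{0,\dots,d-1\}$, so reindexing the sum gives
\begin{equation*}
\frac{1}{\sqrt d}\sum_{p=0}^{d-1}\ket{p\oplus_d k_1}\ket{p\oplus_d k_1}=\frac{1}{\sqrt d}\sum_{p'=0}^{d-1}\ket{p'}\ket{p'}=\ket{\Phi_d}.
\end{equation*}

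Combining the two steps proves the identity for all $d\geq2$ and all $k_1,k_2\in\{0,\dots,d-1\}$.

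There is no real obstacle. The only points that need care are the ordering of the $X$ and $Z$ factors and the meaning of negative powers of $Z_d$, interpreted via $Z_d^{-1}=Z_d^{\dagger}$. I will also remark that applying the shifts first would not break the argument. It would only produce the phases $\omega^{-k_2(p+k_1)}\omega^{k_2(p+k_1)}$, which still cancel, so the conclusion does not depend on the order chosen.
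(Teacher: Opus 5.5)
Your proof is correct and is essentially the paper's argument: expand $\ket{\Phi_d}$ in the computational basis, act with the powers of $X_d$ and $Z_d$, and reindex the sum using the bijection $p\mapsto p\oplus_d k_1$. The only difference is bookkeeping. The paper applies the full operator on one qudit, reindexes, and then checks that the operator on the other qudit undoes it (doing this once from each side, which is redundant). Your factorization $(X_d^{k_1}\otimes X_d^{k_1})(Z_d^{-k_2}\otimes Z_d^{k_2})$ makes the phases cancel before any reindexing, so a single pass suffices.
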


Taking into account \Cref{th_2} and using the notations \(\Big({X_d^{\scriptscriptstyle(S_1)}}^{k}{Z_d^{\scriptscriptstyle(S_1)}}^{l}\otimes {I_d}^{\scriptscriptstyle(N_1)}\Big) \equiv O_{kl}^{\scriptscriptstyle(S_1, N_1)}\) and \(( I_d^{\scriptscriptstyle (S_1)} \otimes {F^{\scriptscriptstyle (N_1)}}^2) \left( \textstyle \sum_{c=0}^{d-1} \left(X_d^{\scriptscriptstyle (S_1)}\right)^{2c}  \otimes (|c\rangle\langle c|)^{\scriptscriptstyle (N_1)} \right) \equiv C^{\scriptscriptstyle (S_1, N_1)}\), we propose the following decryption operator:
\begin{widetext}
\begin{equation}
\label{U_dec_mine}
    U_{dec_{d}}^{(n)} = \sum_{k,l=0}^{d-1}c_{kl}^{-1}\text{SWAP}^{\scriptscriptstyle(S_1,N_1)}C^{\scriptscriptstyle (S_1,N_1)}
    O_{kl}^{\scriptscriptstyle(S_1N_1)}\ketbra{\Phi_d}{O_{kl}^{\scriptscriptstyle(S_1N_1)}}^\dagger \\ 
    \otimes \bigotimes_{j=2}^{n}\Big({X_d^{\scriptscriptstyle(N_j)}}^{k}{Z_d^{\scriptscriptstyle(N_j)}}^{-l}\Big).
\end{equation}
\end{widetext}
Hence, we defined the decryption operator that retrieves the information stored originally in the \(\ket{\psi}_A\) to the \(\ket{\psi}_{S_1} \). Without loss of generality, the construction can be easily modified so that the information retrieval can be made on any other \(\ket{\psi}_{S_i}\). This transformation is possible because the state obtained after the encryption process is symmetrical, in the sense that the same operators have been applied to all the  \(\ket{\psi}_{S_i}\). The construction that we proposed for the decryption operator is equivalent to a SWAP gate between the data qudit \(\ket{\psi}_A\) and \(\ket{\psi}_{N_1}\), as we have proved in Appendix \ref{th_swap}. Unlike the expression in~[\onlinecite{y4y1-1ll6}], multi-dimensional quantum states require an additional SWAP gate and a gate $C$, composed of two one-qudit and two two-qudit gates.

As the expression given for \(U_{dec_d}^{(n)}\) in \eqref{U_dec_mine} is a sum of unitary terms, it is not trivial that the result is also unitary. Thus, we provide in Appendix \ref{unitary_appendix_dec} the proof that the matrix is indeed unitary, using the property that the generalized Pauli matrices form an orthonormal basis. Therefore, proving that the proposed matrix is unitary allows us to easily implement it as a gate in a qudit quantum circuit.

\subsection{Correctness of the unitaries}
As stated in Section \ref{sec:Preliminaries}, the quantum state obtained after applying the encryption unitary to the initial state should be an encrypted state that reveals no information to an external observer. Furthermore, applying the decryption unitary to the encrypted state must recover the original information stored in the data qudit to the chosen party, specifically \(S_1\) in our case. If both requirements are satisfied, we assume that the two unitaries are suitable within the scope of the protocol. To prove this, we propose \Cref{th_Tr_Ni} and \Cref{trace_gen_bell}. We provide the corresponding proofs in Appendices \ref{proof_th_Tr_Ni} and \ref{proof_trace_gen_bell}.

\begin{theorem}
\label{th_Tr_Ni}
For any operator \(O_1,O_2 \in \mathcal{M}_{\mathbb{C}^{d}\times\mathbb{C}^{d}}\),
    \[\text{Tr}_B\Big((O_1\otimes I_B)\ket{\Phi_d}_{A,B}\bra{\Phi_d}_{A,B}(O_2^{\dagger }\otimes I_B)\Big) = \frac{1}{d}O_1O_2^{\dagger}.\]
\end{theorem}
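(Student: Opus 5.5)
The plan is a direct computation in the computational basis. First, I will expand the generalized Bell state from \eqref{Gen_Bell} as
\[
\ket{\Phi_d}\bra{\Phi_d}_{A,B}=\frac{1}{d}\sum_{p,q=0}^{d-1}\ket{p}\bra{q}_A\otimes\ket{p}\bra{q}_B .
\]
Next, I will act with $O_1\otimes I_B$ on the left and $O_2^{\dagger}\otimes I_B$ on the right. Because these operators are tensor products that are trivial on $B$, they only act on the $A$ factor of each term. This gives
\[
\frac{1}{d}\sum_{p,q}O_1\ket{p}\bra{q}O_2^{\dagger}\otimes\ket{p}\bra{q}.
\]

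Then I will take the partial trace over $B$ using linearity and the rule $\text{Tr}_B(M\otimes\ket{p}\bra{q})=M\,\langle q|p\rangle=M\,\delta_{pq}$. The Kronecker delta collapses the double sum to the diagonal $p=q$, leaving
\[
\frac{1}{d}\sum_{p}O_1\ket{p}\bra{p}O_2^{\dagger}.
\]
Finally, the completeness relation $\sum_{p}\ket{p}\bra{p}=I_d$ of the computational basis yields $\frac{1}{d}O_1O_2^{\dagger}$, which is the claimed identity.

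No unitarity or other property of $O_1,O_2$ is needed, so the argument holds for arbitrary $d\times d$ complex matrices. There is no real obstacle. The only point requiring care is keeping the order of $O_1$ and $O_2^{\dagger}$ on the $A$ factor, so that one obtains $O_1O_2^{\dagger}$ and not a transpose. That transposition would arise only if the operators acted on $B$ and one traced over $A$, in the spirit of the identity $(M\otimes I)\ket{\Phi_d}=(I\otimes M^{\mathsf T})\ket{\Phi_d}$. Here we never move operators across the Bell state, so no transpose appears.
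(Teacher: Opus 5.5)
Your proposal is correct and follows essentially the same route as the paper. Both expand $\ket{\Phi_d}\bra{\Phi_d}$ in the computational basis, let $O_1$ and $O_2^{\dagger}$ act only on the $A$ factor, use the partial trace over $B$ to collapse the double sum to its diagonal, and finish with $\sum_{p}\ket{p}\bra{p}=I_d$; the only cosmetic difference is that the paper writes $\text{Tr}_B$ explicitly as $\sum_{l}(I_d\otimes\bra{l}_B)(\cdot)(I_d\otimes\ket{l}_B)$, whereas you use the equivalent rule $\text{Tr}_B(M\otimes\ket{p}\bra{q})=M\,\delta_{pq}$ directly.
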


\begin{theorem}
\label{trace_gen_bell}
For all \(k,l,m,n \in \{0,\dots,d-1\}\),
\[\text{Tr}\Big((X_d^{\scriptscriptstyle k}Z_d^{\scriptscriptstyle l}\otimes I_d)\ket{\Phi_d}\bra{\Phi_d}(Z_d^{\scriptscriptstyle -n}X_d^{\scriptscriptstyle -m}\otimes I_d)\Big) = \delta_{k,m}\delta_{l,n}.\]
\end{theorem}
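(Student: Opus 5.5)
The plan is to reduce the full trace to a trace over a single qudit, using \Cref{th_Tr_Ni}, and then use the trace-orthogonality of the Weyl operators. First I rewrite the operator so that \Cref{th_Tr_Ni} applies. Set \(O_1 = X_d^{k}Z_d^{l}\) and \(O_2 = X_d^{m}Z_d^{n}\). Since \(X_d\) and \(Z_d\) are unitary, \(O_2^{\dagger} = Z_d^{-n}X_d^{-m}\), which is exactly the right-hand factor in the statement. The full trace equals the trace over the first qudit of the partial trace over the second. \Cref{th_Tr_Ni} therefore gives
\[
\text{Tr}\Big((O_1\otimes I_d)\ket{\Phi_d}\bra{\Phi_d}(O_2^{\dagger}\otimes I_d)\Big)=\frac{1}{d}\,\text{Tr}\big(X_d^{k}Z_d^{l}Z_d^{-n}X_d^{-m}\big).
\]

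Next I evaluate this single-qudit trace. By cyclicity, \(\text{Tr}(X_d^{k}Z_d^{l-n}X_d^{-m}) = \text{Tr}(X_d^{k-m}Z_d^{l-n})\). In the computational basis, \(X_d^{a}Z_d^{b}\ket{j} = \omega^{bj}\ket{j\oplus_d a}\). So the diagonal element \(\bra{j}X_d^{a}Z_d^{b}\ket{j}\) vanishes unless \(a\equiv 0 \pmod d\), and when \(a\equiv 0\) it equals \(\omega^{bj}\). The trace is therefore \(\delta_{a,0}\sum_{j=0}^{d-1}\omega^{bj}\).

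The last step is the geometric sum of roots of unity. It equals \(d\) when \(b\equiv 0\pmod d\). Otherwise it equals \((1-\omega^{bd})/(1-\omega^{b}) = 0\), because \(\omega^{b}\neq 1\). Hence \(\text{Tr}(X_d^{k-m}Z_d^{l-n}) = d\,\delta_{k,m}\delta_{l,n}\). Since \(k,l,m,n\in\{0,\dots,d-1\}\), congruence modulo \(d\) is the same as equality, so the Kronecker deltas are the ones in the statement. Multiplying by \(1/d\) gives the claim.

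I do not expect a real obstacle. The only care needed is the index bookkeeping: the exponents must be reduced modulo \(d\) (using \(X_d^d=Z_d^d=I_d\)), and the negative powers must be read as inverses. If one prefers not to rely on \Cref{th_Tr_Ni}, the whole computation can also be done directly from \(\ket{\Phi_d}=\frac{1}{\sqrt d}\sum_p\ket{p}\ket{p}\): the trace becomes \(\frac1d\sum_{p}\bra{p}O_2^{\dagger}O_1\ket{p}\), which is the same expression.
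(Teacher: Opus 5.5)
Your proof is correct, and it takes a slightly different route from the paper's.

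The paper uses cyclicity of the trace to turn the trace of the rank-one operator into the scalar $\langle\Phi_d|(Z_d^{-n}X_d^{-m}X_d^{k}Z_d^{l}\otimes I_d)|\Phi_d\rangle$. It then cites Lemma~\ref{th_orthonormal_basis}, the orthonormality of the Bell basis $(X_d^{k}Z_d^{l}\otimes I_d)|\Phi_d\rangle$. That lemma is needed anyway for the unitarity and correctness arguments, so the theorem becomes a two-line corollary.

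You instead take the partial trace first via Theorem~\ref{th_Tr_Ni}. You then prove the Hilbert--Schmidt orthogonality $\mathrm{Tr}(X_d^{a}Z_d^{b})=d\,\delta_{a,0}\delta_{b,0}$ from scratch.

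Since $\langle\Phi_d|(M\otimes I_d)|\Phi_d\rangle=\frac{1}{d}\mathrm{Tr}\,M$, both routes land on the same quantity, $\frac{1}{d}\mathrm{Tr}\big((X_d^{m}Z_d^{n})^{\dagger}X_d^{k}Z_d^{l}\big)$, as your closing remark already notes.

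What your version buys is transparency. The explicit diagonal-element computation makes the modular bookkeeping visible. It also avoids a sign slip in the paper's proof of Lemma~\ref{th_orthonormal_basis}: there $\langle i|Z_d^{-k_2}X_d^{-k_1}$ is written as $\omega^{-ik_2}\langle i\ominus_d k_1|$ instead of $\omega^{-ik_2}\langle i\oplus_d k_1|$, although the stated final $\delta_{k_1,k_3}$ is still correct. What the paper's version buys is economy, by reusing a lemma it already has.
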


To analyze whether the encrypted state provides no information about the data qudit, we compute the partial trace over a specific subsystem and show that it equals the maximally mixed state. In this scenario, we can be sure that no adversary can retrieve any information. We perform the computation on subsystem \(S_1\). We proved that the expected result is obtained without imposing any additional constraints on the other subsystems. Thus, the same argument can be applied to any other \(S_i\).

To show that the proposed decryption gate performs the correct operation, we begin with the encrypted state and analyze its evolution. Since we selected the subsytem \((S_1)\) as the designated party, we demonstrate that the final state consists of a generalized Bell state \(\ket{\Phi_d}\) shared between qudits \(\ket{\psi}_A, \text{ and } \ket{\psi}_{N1}\), while the data qudit is transferred to \(\ket{\psi}_{S_1}\). All other pairs are likewise left in the generalized Bell state \(\ket{\Phi_d}\). Since this result is obtained without imposing any additional constraints, we can assume, without loss of generality, that the proposed unitary correctly implements the desired operation.

\section{\label{sec:impl}Implementation analysis}
In this section we present how the unitaries described for encryption and decryption could be implemented in a quantum circuit.
\subsection{Implementation of encryption operator}
The circuit that implements the encryption unitary is constructed by two separate circuits. Each of the two is implementing one of \(V(P_X), V(P_Z)\), considering the expression of \(V(P)\) from \eqref{new_operator}. The circuit that implements \(V(P_Z)\) is presented in Figure \ref{FIG_impl_PZ}, while the one that implements \(V(P_X)\) in Figure \ref{FIG_impl_PX}.
\begin{figure}[t]
    
    \begin{center}
    
        \begin{quantikz}[column sep=6pt, row sep=0.1cm, scale=0.8, transform shape]
        \lstick{$A$}  &\control{} \vqw{1}& \qw & \qw &\qw &\qw &\qw & \control{} \vqw{1} & \qw&
        \\
        
        \lstick{${S_1}$}& \gate[style={inner sep=-1pt}]{X_d^{\scriptscriptstyle\phantom{\dagger}}}& \control{} \vqw{1} &\qw &\qw &\qw &\control{} \vqw{1}        & \gate[style={inner sep=-1pt}]{X_d^{\scriptscriptstyle\dagger}}  & \qw &\\
        
        \lstick{${S_2}$}& \qw&\gate[style={inner sep=-1pt}]{X_d^{\scriptscriptstyle\phantom{\dagger}}}&\qw & \qw & \qw & \gate[style={inner sep=-1pt}]{X_d^{\scriptscriptstyle\dagger}}&\qw &\qw&
        \\
        
        \dots\\

        \lstick{${S_{n-1}}$}& \qw & \qw & \control{} \vqw{1}& \qw  &\control{} \vqw{1}&\qw &\qw &\qw&
        \\
        
        \lstick{${S_ n}$}& \qw & \qw &\gate[style={inner sep=-1pt}]{X_d^{\scriptscriptstyle\phantom{\dagger}}}& \gate{Q} &\gate[style={inner sep=-1pt}]{X_d^{\scriptscriptstyle\dagger}} & \qw &\qw &\qw&
        \end{quantikz}
    \end{center}
    \caption{Quantum circuit for implementation of \(V(P)\) from \eqref{new_operator}, with \(P=P_Z\). The circuit implements the controlled gate \(C(X_d)\), as defined in \eqref{def_CXd}.}
    \label{FIG_impl_PZ}
\end{figure}
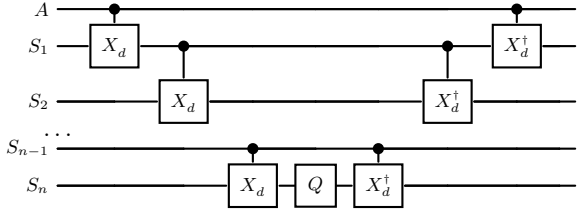
The circuit in Figure \ref{FIG_impl_PZ} consists of \(n \times\) \(C(X_d)\) gates, \(n \times\) \(C(X_d)^{\dagger}\) gates and the single qudit gate \(Q\). The single qudit gate \(Q\) is given as
\begin{equation}
    Q\ket{k} = \frac{1}{\sqrt{d}}\sum_{j=0}^{d-1}e^{-i\frac{\pi j(j+d\%2)}{d}}e^{\frac{2i\pi jk}{d}}\ket{k}.
\end{equation}
It is obvious that the gate is a diagonal matrix, which could be implemented using \(d\) single-qudit gates, as presented in~[\onlinecite{Nikolaeva2024}]. Thus, we express the \(Q\) matrix as \(\text{diag}(\phi_0, \phi_1,\dots,\phi_{d-1})\), with each angle of rotation being expressed as
\begin{equation}
    \phi_k = \arg\Big(\sum_{j=0}^{d-1}e^{-i\frac{\pi j(j+d\%2)}{d}}e^{\frac{2i\pi jk}{d}}\Big).
\end{equation}
The state of the quantum system before applying the gate \(Q\) is given as
\[
\sum_{i_A, p_1,\dots,p_n=0}^{d-1}\ket{i_A}\ket{i_A \oplus_d p_1}\dots\ket{i_A \oplus_d p_1 \oplus_d\dots \oplus_d p_n},
\]
where we have been interested just in the quantum states, because the amplitudes are not relevant for proving the effect of the circuit. The state of the circuit after the \(Q\) gate is 
\begin{align*}
\sum_{i_A, p_1,\dots,p_n=0}^{d-1}\sum_{j=0}^{d-1}e^{-i\frac{\pi j(j+d\%2)}{d}}\omega^{j(i_A+p_1+\dots+p_n)}\cdot\\
\cdot \ket{i_A}\dots\ket{i_A \oplus_dp_1\oplus_d\dots \oplus_d p_n}.
\end{align*}
After performing the ladder composed of \(C(X_d)^{\dagger}\), the state at the end of the circuit is 
\[
\sum_{j=0}^{d-1}e^{-i\frac{\pi j(j+d\%2)}{d}}\omega^{ji_A}\ket{i_A}\omega^{jp_1}\ket{p_1}\dots\omega^{jp_n}\ket{p_n},
\]
which is exactly the implementation of the \(V(P_Z)\) operator because \(Z_d^{k}\ket{j} = \omega^{jk}\ket{j}\). Thus, the implementation of the \(V(P_Z)\) operator can be implemented using \(2n\) two-qudit gates and \(d-1\) single qudit gates, as the gate corresponding to \(\phi_0\) is the identity. The implementation of \(V(P_X)\) follow the same principle, with the adition of \(n\times F\) and \(n\times F^{\dagger}\)gates, as we recall from \eqref{XZ_TRANS} that \(X_d = F^{\dagger}Z_dF\). Thus, the entire encryption unitary could be implemented using 
\begin{equation}
    \label{NE1Q}
    N_{E_{2Q}} = 4n 
\end{equation}
two-qudit gates and 
\begin{equation}
    N_{E_{1Q}} = 2n+2(d-1)
\end{equation}
single-qudit gates.
\begin{figure}[t]
    \begin{center}
        \begin{quantikz}[column sep=3.3pt, row sep=0.1cm, scale=0.8, transform shape]
        \lstick{$A$} &\gate{F} &\control{} \vqw{1}& \qw & \qw &\qw &\qw &\qw & \control{} \vqw{1} & \qw&\gate{F^{\scriptscriptstyle\dagger}}&&
        \\
        
        \lstick{${S_1}$}&\gate{F} &\gate[style={inner sep=-1pt}]{X_d^{\scriptscriptstyle\phantom{\dagger}}}& \control{} \vqw{1} &\qw &\qw &\qw &\control{} \vqw{1}        & \gate[style={inner sep=-1pt}]{X_d^{\scriptscriptstyle\dagger}}  & \qw &\gate{F^{\scriptscriptstyle\dagger}}&&
        \\
        
        \lstick{${S_2}$}&\gate{F} &\qw&\gate[style={inner sep=-1pt}]{X_d^{\scriptscriptstyle\phantom{\dagger}}}&\qw & \qw & \qw & \gate[style={inner sep=-1pt}]{X_d^{\scriptscriptstyle\dagger}}&\qw &\qw&\gate{F^{\scriptscriptstyle\dagger}}&&
        \\
        
        \dots\\

        \lstick{${S_{n-1}}$}&\gate{F} &\qw & \qw & \control{} \vqw{1}& \qw  &\control{} \vqw{1}&\qw &\qw &\qw&\gate{F^{\scriptscriptstyle\dagger}}&&
        \\
        
        \lstick{${S_ n}$}& \gate{F}&\qw & \qw &\gate[style={inner sep=-1pt}]{X_d^{\scriptscriptstyle\phantom{\dagger}}}& \gate{Q} &\gate[style={inner sep=-1pt}]{X_d^{\scriptscriptstyle\dagger}} & \qw &\qw &\qw&\gate{F^{\scriptscriptstyle\dagger}} &&
        \end{quantikz}
    \end{center}
    \caption{Quantum circuit for implementation of \(V(P)\) from \eqref{new_operator}, with \(P=P_X\). The circuit implements the controlled gate \(C(X_d)\), as defined in \eqref{def_CXd}}
    \label{FIG_impl_PX}
\end{figure}
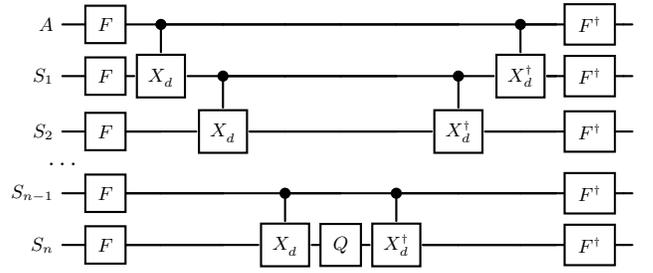

\subsection{Implementation of decryption operator}
As the decryption operator has been constructed as the one in~[\onlinecite{y4y1-1ll6}], the implementation of the unitary follows the same principle. Thus, we write the decryption unitary as 
\begin{equation}
    U_{dec_d}^{(n)} = c_{00}\cdot {\overline{T}^{\dagger}_{S_1,N_1}}\cdot T_{d^2-1}\cdot ... \cdot T_2\cdot  T_1\cdot\overline{T}_{S_1,N_1}.
\end{equation}
The gate \(\overline{T}_{S_1,N_1}\) performs the mapping
\begin{equation}
\overline{T}_{S_1,N_1}(X_d^{k}Z_d^{l}\otimes I_d)\ket{\Phi_d} = \ket{k}\ket{l},
\end{equation}
and can be easily implemented using a \(C(X_d)^{\dagger}\) gate, a \(F^{\dagger}\) gate, and two SWAP gates. Each of the \(d^2\) T gates has the following construction
\begin{align}
\label{Tk}
    T_{\overline{kl}} &= \frac{c_{kl}}{c_{00}}\ket{k}\ket{l}\bra{k}\bra{l}_{S_1,N_1} \otimes \Big(\bigotimes_{j=2}^{n}{X_d^{\scriptscriptstyle(N_j)}}^{k}{Z_d^{\scriptscriptstyle(N_j)}}^{-l}\Big) + \notag\\
    &+ \Big(I_d^{\scriptscriptstyle(S_1,N_1)} - \ket{k}\ket{l}\bra{k}\bra{l}_{S_1,N_1} \otimes I_d^{\scriptscriptstyle(N_2,\dots,N_n)}\Big),
\end{align}
where we have noted by \(\overline{kl} = kd+l\).
We can observe that when \(k=l=0\), we have \(T_{\overline{00}} = I_d\). Thus, we need to implement just \(d^2-1\) gates, each being constructed as the circuit from Figure \ref{FIG_impl_T}. Each of the \(X_d\) and \(Z_d\) gates is $p$-controlled by the qudits \(S_1\) and \(N_1\). The control level for the qudit \(S_1\) is given by the value of \(k\), while for the qudit \(N_1\) it is given by the value of \(l\). To mimic the control level for each of the \(T_{\overline{kl}}\) gates, we can use \((2k)\times X_d\) gates for qudit \(S_1\) and \((2l)\times\) \(X_d\) gates for qudit \(N_1\). 
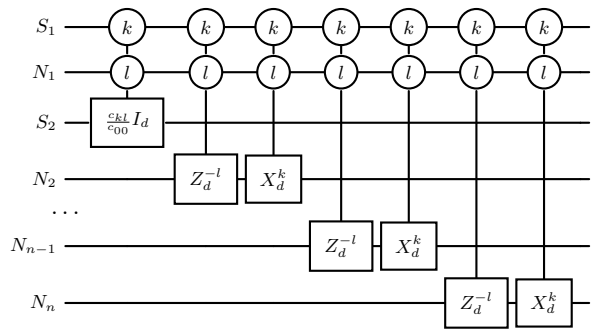
\begin{figure}[t]
    \begin{center}
        \begin{quantikz}[column sep=3.3pt, row sep=0.1cm, scale=0.8, transform shape]

        \lstick{$S_1$} &\gate[style={circle, inner sep=-2pt, text=white}]{k} \vqw{1}&\gate[style={circle, inner sep=-2pt, text=white}]{k} \vqw{1}&\gate[style={circle, inner sep=-2pt, text=white}]{k} \vqw{1}&\gate[style={circle, inner sep=-2pt, text=white}]{k} \vqw{1}&\gate[style={circle, inner sep=-2pt, text=white}]{k} \vqw{1}&\gate[style={circle, inner sep=-2pt, text=white}]{k} \vqw{1}&\gate[style={circle, inner sep=-2pt, text=white}]{k} \vqw{1}&&
        \\
        
        \lstick{$N_1$} &\gate[style={circle, inner sep=-2pt, text=white}]{l} \vqw{1}&\gate[style={circle, inner sep=-2pt, text=white}]{l} \vqw{2}&\gate[style={circle, inner sep=-2pt, text=white}]{l} \vqw{2}&\gate[style={circle, inner sep=-2pt, text=white}]{l} \vqw{4}&\gate[style={circle, inner sep=-2pt, text=white}]{l} \vqw{4}&\gate[style={circle, inner sep=-2pt, text=white}]{l} \vqw{5}&\gate[style={circle, inner sep=-2pt, text=white}]{l} \vqw{5}&&
        \\
        
        \lstick{$S_2$}&\gate[style={inner xsep=-1pt}]{{\scriptscriptstyle \frac{c_{kl}}{c_{00}}}I_d}&&&&&&&&
        \\
        
        \lstick{$N_2$}&&\gate{Z_d^{-l}}&\gate{X_d^{k}}&&&&&&
        \\
        
        \dots\\
        \lstick{$N_{n-1}$}&&&&\gate{Z_d^{-l}}&\gate{X_d^{k}}&&&&
        \\
        
        \lstick{$N_n$}&&&&&&\gate{Z_d^{-l}}&\gate{X_d^{k}}&&

        \end{quantikz}
    \end{center}
    \caption{Quantum circuit for implementation of \(T_{\overline{kl}}\) from \eqref{Tk}.The control values for the \(C_k(X_d)\) and \(C_l(Z_d)\) are marked by the values in the circles, using the definitions of the quantum gates from \eqref{k_controlled}.}
    \label{FIG_impl_T}
\end{figure}
Thus, all of the \(T_{\overline{kl}}\) gates can be implemented using \((2n-1)d^2(d-1)\) single qudit gates. Also, considering that we apply an individual double-controlled gate for each of the \(X_d^k\) and \(Z_d^{-l}\), the required number of double-controlled qudit gates is equal to \((2n-1)(d^2-1)\). 

To count the total number of single qudit gates that are needed to perform the decryption unitary, we sum up the number of gates for all of the \(T_{\overline{kl}}\) with \(2\) gates needed for the \(\overline{T}, \overline{T}^{\dagger}\) and with the two aditional \(F\) gates needed to construct the \(C\) matrix. Thus, we obtain a total of  
\begin{equation}
    N_{D_{1Q}} = 2+(2n-1)d^2(d-1)
\end{equation}
single-qudit gates. 
The total number of two-qudit gates required is obtained by summing the \(9\) gates needed to implement \(\overline{T}, \overline{T}^{\dagger}\), i.e. \(4\times C(X_d) + 4\times\text{SWAP} + 1\times\text{SWAP}\) (the additional SWAP and $2\times C(X_d)$ comes from the additional SWAP and $C$ gate, that are presented in the construction of the decryption operator \eqref{U_dec_mine}) and the \(\alpha \times (2n-1)(d^2-1)\) gates for implementing all of the \(T_{\overline{kl}}\). We can get an estimate on the parameter \(\alpha\) from~[\onlinecite{PhysRevA.87.012325}], as any double-controlled unitary can be implemented with at most \(8(d-1)\) two-qudit gates. Thus, the total number of two-qudit gates required for implementing the decryption operator is at most
\begin{equation}
    N_{D_{2Q}} = 9+8(2n-1)(d^3-d^2-d+1).
\end{equation}

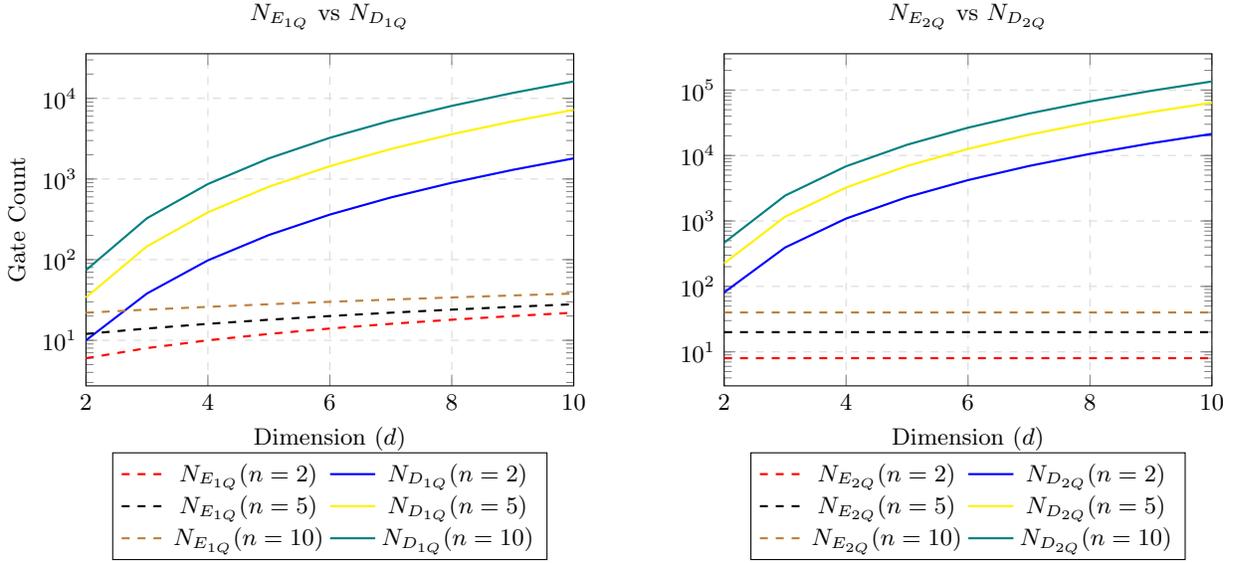
\begin{figure*}[ht]
\centering
\begin{tikzpicture}
    \begin{groupplot}[
        group style={
            group size=2 by 1, 
            horizontal sep=2cm,
        },
        width=0.45\linewidth, 
        height=6cm,
        ymode=log,            
        grid=major,
        grid style={dashed, gray!30},
        xlabel={Dimension ($d$)},
        xmin=2, xmax=10,
        legend style={font=\small, at={(0.5,-0.2)}, anchor=north, legend columns=2},
        cycle list name=color list 
    ]

    \nextgroupplot[title={$N_{E_{1Q}}$ vs $N_{D_{1Q}}$}, ylabel={Gate Count}]
        \addplot+ [dashed, thick] table [x=d, y=NE1Q_n2, col sep=comma] {imgs/gates_1q.csv};
        \addlegendentry{$N_{E_{1Q}} (n=2)$}
        \addplot+ [solid, thick, mark=none] table [x=d, y=ND1Q_n2, col sep=comma] {imgs/gates_1q.csv};
        \addlegendentry{$N_{D_{1Q}} (n=2)$}
        
        \addplot+ [dashed, thick] table [x=d, y=NE1Q_n5, col sep=comma] {imgs/gates_1q.csv};
        \addlegendentry{$N_{E_{1Q}} (n=5)$}
        \addplot+ [solid, thick, mark=none] table [x=d, y=ND1Q_n5, col sep=comma] {imgs/gates_1q.csv};
        \addlegendentry{$N_{D_{1Q}} (n=5)$}

        \addplot+ [dashed, thick] table [x=d, y=NE1Q_n10, col sep=comma] {imgs/gates_1q.csv};
        \addlegendentry{$N_{E_{1Q}} (n=10)$}
        \addplot+ [solid, thick, mark=none] table [x=d, y=ND1Q_n10, col sep=comma] {imgs/gates_1q.csv};
        \addlegendentry{$N_{D_{1Q}} (n=10)$}

    \nextgroupplot[title={$N_{E_{2Q}}$ vs $N_{D_{2Q}}$}]
        \addplot+ [dashed, thick] table [x=d, y=NE2Q_n2, col sep=comma] {imgs/gates_2q.csv};
        \addlegendentry{$N_{E_{2Q}} (n=2)$}
        \addplot+ [solid, thick, mark=none] table [x=d, y=ND2Q_n2, col sep=comma] {imgs/gates_2q.csv};
        \addlegendentry{$N_{D_{2Q}} (n=2)$}
        
        \addplot+ [dashed, thick] table [x=d, y=NE2Q_n5, col sep=comma] {imgs/gates_2q.csv};
        \addlegendentry{$N_{E_{2Q}} (n=5)$}
        \addplot+ [solid, thick, mark=none] table [x=d, y=ND2Q_n5, col sep=comma] {imgs/gates_2q.csv};
        \addlegendentry{$N_{D_{2Q}} (n=5)$}

        \addplot+ [dashed, thick] table [x=d, y=NE2Q_n10, col sep=comma] {imgs/gates_2q.csv};
        \addlegendentry{$N_{E_{2Q}} (n=10)$}
        \addplot+ [solid, thick, mark=none] table [x=d, y=ND2Q_n10, col sep=comma] {imgs/gates_2q.csv};
        \addlegendentry{$N_{D_{2Q}} (n=10)$}
        
    \end{groupplot}
\end{tikzpicture}
\caption{Comparison of $N_E$ and $N_D$ gate counts for varying dimensions $d$ and qudit numbers $n$. Left: Single-qudit gate scaling. Right: Two-qudit gate scaling.}
\label{fig:gate_comparison}
\end{figure*}
Comparing the scaling of the required resources, we observe that for the encryption gate, our results are similar to those in~[\onlinecite{y4y1-1ll6}]. The number of two-qudit gates needed to perform the encryption does not depend on the dimension of the quantum system, as presented in \eqref{NE1Q}. We observe a linear scaling with dimension for the necessary single-qudit quantum gates. The decryption operation introduces a gate complexity that scales \(\mathcal{O}(nd^3)\) for both single and two-qudit gates, as it can be observed in \Cref{fig:gate_comparison}. Applying the protocol for qudits does not impose an increase in complexity compared to the case of qubits, because in the original reference, the number of gates that are equivalent to \(T_{\overline{kl}}\) that were applied is equal to \(3\). The difference arises from the decomposition of double-controlled qudit gates, whose decomposition into two-qudit gates could be implemented with \(\mathcal{O}(d)\) gate complexity. The difference in the number of single-qudit gates required to perform the decryption matrix arises from the fact that for the case of \(d=2\), the $p$-controlled gate is equivalent to the controlled gate.

\section{\label{sec:concl}Conclusions}
We have solved the open problem posed by Yamaguchi and Kempf and proved that the protocol generalizes successfully to qudits. As the Pauli operators are not hermitic for any \(d\geq3\),  we have constructed an operator based on CAZAC sequences. The proposed encryption operator is a consistent generalization of the encryption gate in the original paper, as we have proved. The decryption matrix follows the construction architecture from the qubit case and is adjusted to account for the differences introduced by qudits. The additional overhead introduced by the quantum system's dimensionality is linear in the qudit dimension for both encryption and decryption. 

\begin{acknowledgments}
We thank D. Maimu\c{t}, A. Frunz\u{a}, and G. Te\c{s}eleanu for their valuable comments, and R. Ionicioiu for insightful discussions and technical suggestions.
\end{acknowledgments}

\section*{Data Availability}
The code used for the construction of the gates, the simulation of the protocol, and the testing of the proved Lemmas and Theorems is available at the following repository \url{https://github.com/FipNad/Cloning-encrypted-quantum-states-in-arbitrary-dimensions}.

\appendix

\section{Useful Lemmas for multi-dimensional quantum states}
\begin{lema}
\label{th_ricochet}
    For any operator \(U\in \mathcal{M}_{\mathbb{C}^{d}\times \mathbb{C}^d}\), the following equality is true.
    \[
    (U\otimes I_d)\ket{\Phi_d} = (I_d\otimes U^{\mathsf{T}})\ket{\Phi_d}
    \]
\end{lema}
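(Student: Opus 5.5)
The plan is to prove the identity by expanding both sides in the computational basis and comparing coefficients. Since both sides are linear in \(U\), it would also suffice to check matrix units \(U=\ket{a}\bra{b}\), but the direct expansion is just as short.

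Write \(U=\sum_{i,j=0}^{d-1}U_{ij}\ket{i}\bra{j}\), where \(U_{ij}=\bra{i}U\ket{j}\). Then the transpose is \(U^{\mathsf{T}}=\sum_{i,j}U_{ij}\ket{j}\bra{i}\), taken with respect to the same computational basis in which \(\ket{\Phi_d}\) is defined in \eqref{Gen_Bell}. This basis dependence is the only subtle point of the lemma, and the statement implicitly fixes it.

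First compute the left-hand side:
\[
(U\otimes I_d)\ket{\Phi_d}=\frac{1}{\sqrt{d}}\sum_{p}U\ket{p}\otimes\ket{p}=\frac{1}{\sqrt{d}}\sum_{i,p}U_{ip}\ket{i}\ket{p}.
\]
Next compute the right-hand side:
\[
(I_d\otimes U^{\mathsf{T}})\ket{\Phi_d}=\frac{1}{\sqrt{d}}\sum_{p}\ket{p}\otimes U^{\mathsf{T}}\ket{p}=\frac{1}{\sqrt{d}}\sum_{p,j}U_{pj}\ket{p}\ket{j}.
\]
Relabelling the dummy indices \(p\to i\) and \(j\to p\) turns the second expression into the first.

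No earlier result of the paper is needed. The only step requiring care is the index bookkeeping for the transpose: one must verify that \(\bra{j}U^{\mathsf{T}}\ket{p}=U_{pj}\), which puts the coefficient \(U_{ip}\) on \(\ket{i}\ket{p}\) in both expressions. Because the argument never uses unitarity, the identity holds for arbitrary \(U\in\mathcal{M}_{\mathbb{C}^{d}\times\mathbb{C}^d}\), as the lemma claims.
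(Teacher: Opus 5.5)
Your proposal is correct and follows essentially the same route as the paper: both expand $U$ and $U^{\mathsf{T}}$ in the computational basis, compute each side as $\frac{1}{\sqrt{d}}\sum U_{ip}\ket{i}\ket{p}$ up to relabelling of dummy indices, and observe equality. Your remark that the transpose must be taken in the same basis used to define $\ket{\Phi_d}$ is a useful clarification that the paper leaves implicit.
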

\begin{proof}
We prove the lemma by computing the LHS and the RHS and observing that they are equal. Firstly, we express \(U\) by means of an orthonormal basis. Thus, we can express \(U\) and its transpose as:
\[
    U = \sum_{i,j=0}^{d-1}U_{ij}\ket{i}\bra{j}, \text{ and } U^{\mathsf{T}} = \sum_{i,j=0}^{d-1}U_{ji}\ket{i}\bra{j}.
\]
The LHS has the following derivation
\begin{align*}
(U\otimes I_d)\ket{\Phi_d} &= \frac{1}{\sqrt{d}}\sum_{p=0}^{d-1}\Big(\sum_{i,j=0}^{d-1}U_{ij}\ket{i}\braket{j}{p}\Big)\ket{p} \\
&= \frac{1}{\sqrt{d}}\sum_{p,i=0}^{d-1}U_{ip}\ket{i}\ket{p}.
\end{align*}
Similarly, the RHS of the equation has the following value
\begin{align*}
(I_d\otimes U^{\mathsf{T}})\ket{\Phi_d} &= \frac{1}{\sqrt{d}}\sum_{p=0}^{d-1}\ket{p}\Big(\sum_{i,j=0}^{d-1}U_{ji}\ket{i}\braket{j}{p}\Big) \\
&= \frac{1}{\sqrt{d}}\sum_{p,i=0}^{d-1}U_{pi}\ket{p}\ket{i}.
\end{align*}
We can clearly observe that the two states are equal; thus, we can claim that the statement is true.
\end{proof}

\begin{lema}
\label{th_swap}
Let \(\ket{\psi}\) be a qudit state, \(\ket{\Phi_d}\) be the generalized Bell state, \(X_d, Z_d\) the generalized Pauli operators, and $C = \left(I_d \otimes I_d\otimes F^2\right) \cdot \left(I_d\otimes\sum_{c=0}^{d-1}X_d^2 \otimes \ket{c}\bra{c}\right)$. 
    For any \(d\geq2\), the following equality is true.
    \begin{align*}
    C \cdot \frac{1}{d}\sum_{m,n=0}^{d-1}\Big(X_d^{m}Z_d^{n}\Big)\ket{\psi} \otimes \Big(X_d^{m}Z_d^{n} &\otimes I_d\Big)\ket{\Phi_d} = \\
    =\frac{1}{\sqrt{d}}\Big(\sum_{p=0}^{d-1}\ket{p}\ket{p}\Big)\ket{\psi}
    \end{align*}
\end{lema}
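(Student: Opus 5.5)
The plan is a direct computation in the computational basis. I treat the three registers as: the data register carrying $\ket{\psi}$, followed by the two halves of $\ket{\Phi_d}$. I write $\ket{\psi}=\sum_{i}\alpha_i\ket{i}$ and $\ket{\Phi_d}=\frac{1}{\sqrt d}\sum_p\ket{p}\ket{p}$. The idea is that summing over the phase index $n$ collapses the state onto a single correlation between the data index and the Bell index, after which $C$ only has to relabel basis states.

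\textbf{Step 1 (expand).} Using $X_d^mZ_d^n\ket{i}=\omega^{ni}\ket{i\oplus_d m}$, the state inside the lemma becomes
\begin{equation*}
\frac{1}{d\sqrt d}\sum_{m,n,i,p}\alpha_i\,\omega^{n(i+p)}\ket{i\oplus_d m}\ket{p\oplus_d m}\ket{p}.
\end{equation*}

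\textbf{Step 2 (sum over $n$).} The geometric sum gives $\sum_n\omega^{n(i+p)}=d\,\delta_{p,\ominus_d i}$. This is the only place where orthogonality of characters enters, and it cancels the prefactor $1/d$. The state reduces to
\begin{equation*}
\frac{1}{\sqrt d}\sum_{m,i}\alpha_i\ket{m\oplus_d i}\ket{m\ominus_d i}\ket{\ominus_d i}.
\end{equation*}
At this point the third register holds $\ominus_d i$, and the first two registers differ by $2i$.

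\textbf{Step 3 (apply $C$).} Since $F^2\ket{j}=\ket{\ominus_d j}$ (because $F^2$ is the parity operator), $F^2$ maps the third register to $\ket{i}$. The controlled shift $\sum_c X_d^{2c}\otimes\ket{c}\bra{c}$ is meant to add $2i$ to the second register, turning $\ket{m\ominus_d i}$ into $\ket{m\oplus_d i}$.

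\textbf{Step 4 (relabel).} With $q=m\oplus_d i$, which is a bijection in $m$ for each fixed $i$, the state becomes
\begin{equation*}
\frac{1}{\sqrt d}\sum_q\ket{q}\ket{q}\otimes\sum_i\alpha_i\ket{i}=\frac{1}{\sqrt d}\Big(\sum_p\ket p\ket p\Big)\ket{\psi},
\end{equation*}
which is the right-hand side of the lemma. Linearity in $\ket\psi$ then finishes the proof; no special treatment of $d=2$ is needed.

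\textbf{Main obstacle: the order and sign convention in $C$.} If the controlled shift literally acts first with control value $c=\ominus_d i$, it adds $\ominus_d 2i$ and produces $\ket{m\ominus_d 3i}$, which does not match the target. The computation only closes if one of the following holds:
\begin{itemize}
\item $F^2$ acts on the control register before the controlled shift, so that the control value is $i$;
\item equivalently, the shift is $X_d^{-2c}$ with control value $\ominus_d i$.
\end{itemize}
I would first fix the convention so that the net effect is $\ket{m\ominus_d i}\ket{\ominus_d i}\mapsto\ket{m\oplus_d i}\ket{i}$. I would then check it against the definition of $C^{(S_1,N_1)}$ used in \eqref{U_dec_mine}, where the exponent on $X_d$ depends on $c$. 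Finally, I would verify the case $d=3$ explicitly on one basis state to make sure the sign is right, since for $d=2$ all these signs coincide and give no check.
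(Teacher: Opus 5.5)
Your argument is essentially the paper's computation: the paper first moves $X_d^mZ_d^n$ onto the other half of $\ket{\Phi_d}$ with the transpose identity, which is a cosmetic difference from your direct expansion, and the $n$-sum then gives the same state as your Step 2, namely $\frac{1}{\sqrt d}\sum_{i,p}\alpha_i\ket{2i\oplus_d p}\ket{p}\ket{d\ominus_d i}$ (relabel $p=m\ominus_d i$); after that the paper applies $F^2$ to the control before the shift $X_d^{2c}$, writing $X_d^{2c}\ket{p}\ket{c}\bra{c}F^2\ket{d\ominus_d i}$. The obstacle you flag is genuine and the paper passes over it silently: that $F^2$-first ordering is the reverse of the product as written in $C$ (whose $X_d^2$ must also be read as $X_d^{2c}$), and with the literal order the identity already fails for $d=3$ (for example $\ket{\psi}=\ket{1}$ gives $(X_3\otimes I_3)\ket{\Phi_3}\ket{1}$), so adopting the $F^2$-first convention, as you propose, is exactly what makes the lemma and the paper's proof correct.
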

\begin{proof}
We start by using the Lemma \ref{th_ricochet} for rewriting the left-hand side of the equation, using the property that \(Z_d^{\mathsf{T}} = Z_d\), and \(X_d^{\mathsf{T}} = X_d^{-1}\).
\begin{align*}
\text{LHS} &= C\frac{1}{d}\sum_{m,n=0}^{d-1}\Big(X_d^{m}Z_d^{n}\otimes (X_d^{m}Z_d^{n}\otimes I_d)\Big)\ket{\psi} \otimes\ket{\Phi_d} \\
&= C\frac{1}{d}\sum_{m,n=0}^{d-1} \Big(X_d^{m}Z_d^{n}\otimes (I_d \otimes Z_d^{n}X_d^{-m})\Big)\ket{\psi}\otimes\ket{\Phi_d}\\
&\text{We express $\ket{\psi}_A$ using an orthonormal basis}\\
&\text{$\ket{\psi}_{A} = \sum_i Q_i\ket{i}$, and thus obtain}\\
&= C\frac{1}{\sqrt{d^3}}\sum_{m,n,i,p=0}^{d-1}Q_iX_d^{m}Z_d^{n}\ket{i} \otimes\ket{p} \otimes Z_d^{n}X_d^{-m}\ket{p}\\
&= C\frac{1}{\sqrt{d^3}}\sum_{m,n,i,p=0}^{d-1}Q_i\omega^{n(i+p-m)}\ket{i\oplus_dm} \ket{p} \ket{p\ominus_dm}\\
&\text{We use the property: $\sum_{n}\omega^{n(i+p-m)}=d\cdot\delta_{{i+p-m},0}$}\\
&= C\frac{1}{\sqrt{d}}\sum_{i,p=0}^{d-1}Q_i\ket{i\oplus_d\oplus_di\oplus_dp}\ket{p}\ket{d \ominus_d i}\\
&=\frac{1}{\sqrt{d}}\sum_{i,p,c=0}^{d-1}Q_i\ket{2i\oplus_d p}X_d^{2c}\ket{p}\ketbra{c}F^2\ket{d\ominus_di}\\
&= \frac{1}{\sqrt{d}}\sum_{i,p,c=0}^{d-1}Q_i\ket{2i\oplus_dp}X_d^{2c}\ket{p}\ketbra{c}\ket{i}\\
&=\frac{1}{\sqrt{d}}\sum_{i,p=0}^{d-1}Q_i\ket{2i\oplus_dp}\ket{2i\oplus_dp}\ket{i}
=\ket{\Phi_d}\ket{\psi}.
\end{align*}
We showed that the LHS of the equation is equivalent to the RHS; therefore, we can affirm that the lemma is proved.
\end{proof}

\begin{lema}
\label{th_orthonormal_basis}
Let \(\ket{\Phi_d}\) be the generalized Bell state, and \(X_d, Z_d\) the generalized Pauli operators. 
For any 
    \(k_1,k_2, k_3, k_4 \in \{0,\dots,d-1\}\), the following identity holds.
    \[\bra{\Phi_d}(Z_d^{\scriptscriptstyle-k_2}X_d^{\scriptscriptstyle-k_1}\otimes I_d)\cdot (X_d^{k_3}Z_d^{k_4}\otimes I_d)\ket{\Phi_d} = \delta_{k_1,k_3}\delta_{k_2,k_4}\]

\end{lema}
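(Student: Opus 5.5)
The plan is to turn the left-hand side into a normalized trace and then use the orthogonality of the generalized Pauli (Weyl) operators under the Hilbert--Schmidt inner product. Write $W_{k_1k_2} = X_d^{k_1}Z_d^{k_2}$. Since $X_d$ and $Z_d$ are unitary, $Z_d^{-k_2}X_d^{-k_1} = (X_d^{k_1}Z_d^{k_2})^\dagger = W_{k_1k_2}^\dagger$. The quantity to evaluate is therefore
\[
\bra{\Phi_d}\big(W_{k_1k_2}^\dagger W_{k_3k_4}\otimes I_d\big)\ket{\Phi_d}.
\]
For any $M\in\mathcal{M}_{\mathbb{C}^d\times\mathbb{C}^d}$, expanding $\ket{\Phi_d}=\frac{1}{\sqrt d}\sum_p\ket{p}\ket{p}$ gives
\[
\bra{\Phi_d}(M\otimes I_d)\ket{\Phi_d} = \frac{1}{d}\sum_{p,q}\bra{p}M\ket{q}\braket{p}{q} = \frac{1}{d}\text{Tr}(M).
\]
The same identity also follows from Theorem \ref{th_Tr_Ni} by taking a further trace over the first subsystem. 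Hence the claim reduces to showing $\frac1d\text{Tr}(W_{k_1k_2}^\dagger W_{k_3k_4}) = \delta_{k_1,k_3}\delta_{k_2,k_4}$.

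To compute this trace, first evaluate the product directly on a basis vector $\ket{j}$:
\[
Z_d^{-k_2}X_d^{-k_1}X_d^{k_3}Z_d^{k_4}\ket{j} = \omega^{k_4 j}\,\omega^{-k_2(j\oplus_d (k_3-k_1))}\ket{j\oplus_d(k_3-k_1)}.
\]
The diagonal entry $\bra{j}\cdot\ket{j}$ is nonzero only when $k_3-k_1\equiv 0 \pmod d$. Because $k_1,k_3\in\{0,\dots,d-1\}$, this forces $k_1=k_3$ and produces the factor $\delta_{k_1,k_3}$. With $k_1=k_3$ the trace becomes $\sum_{j=0}^{d-1}\omega^{j(k_4-k_2)}$. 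By the root-of-unity sum already used in the proof of Lemma \ref{th_swap}, this equals $d\,\delta_{k_2,k_4}$, again using that $k_2,k_4$ lie in $\{0,\dots,d-1\}$ so that congruence mod $d$ is equality. Dividing by $d$ gives the result.

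The main point needing care is bookkeeping rather than a real obstacle. Two things must be checked: that the phase picked up when $Z_d^{-k_2}$ acts after the shift is correctly evaluated at the shifted index (it is, but it only matters on the diagonal, where the shift is trivial), and that the reduction from modular congruence to the Kronecker deltas relies on the index range stated in the lemma. Beyond these checks, the argument requires no tools other than the maximally entangled trace identity and the geometric sum over $\omega$.
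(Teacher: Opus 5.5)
Your proof is correct and is essentially the paper's argument: both expand $\ket{\Phi_d}$, collapse the double sum to the diagonal $\frac{1}{d}\sum_{j}\bra{j}Z_d^{-k_2}X_d^{-k_1}X_d^{k_3}Z_d^{k_4}\ket{j}$ (which you call $\frac{1}{d}\text{Tr}$), let the shift force $k_1=k_3$, and finish with the root-of-unity sum to get $\delta_{k_2,k_4}$. Your explicit action on $\ket{j}$ is cleaner than the paper's intermediate line, which writes $\bra{i\ominus_d k_1}$ where $\bra{i\oplus_d k_1}$ is correct; as written there, the overlap would give the condition $k_1+k_3\equiv 0 \pmod d$ rather than $k_1=k_3$.
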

\begin{proof}
We prove the Lemma by calculating the inner product 
\[\bra{\Phi_d}(Z_d^{-k_2}X_d^{-k_1}\otimes I_d) (X_d^{k_3}Z_d^{k_4}\otimes I_d)\ket{\Phi_d}.\]
\begin{align*}
    \beta &= \bra{\Phi_d}(Z_d^{-k_2}X_d^{-k_1}\otimes I_d)\cdot (X_d^{k_3}Z_d^{k_4}\otimes I_d)\ket{\Phi_d}\\
    &=\frac{1}{d}\sum_{i,j=0}^{d-1}\bra{ii}Z_d^{\scriptscriptstyle-k_2}X_d^{\scriptscriptstyle-k_1}\otimes I_d) \cdot (X_d^{\scriptscriptstyle k_3}Z_d^{\scriptscriptstyle k_4}\otimes I_d)\ket{jj}\\
    &=\frac{1}{d}\sum_{i,j=0}^{d-1}\omega^{-ik_2}\bra{i\ominus_dk_1}\bra{i}\cdot\omega^{k_4j}\ket{j\oplus_dk_3}\ket{j}\\
    &=\frac{1}{d}\sum_{i,j=0}^{d-1}\omega^{-ik_2+k_4j}\braket{i\ominus_dk_1}{j\oplus_dk_3}\braket{i}{j}\\
    &=\frac{1}{d}\sum_{i=0}^{d-1}\omega^{-i(k_2-k_4)}\braket{i\ominus_dk_1}{i\oplus_dk_3}\\
    &=\frac{1}{d}\Big(\sum_{i=0}^{d-1}1\Big)\delta_{k_1,k_3}\delta_{k_2,k_4} = \delta_{k_1,k_3}\delta_{k_2,k_4}\\
\end{align*}
\end{proof}

\begin{lema}
\label{th_k_k1_=k}
Let \(\ket{\Phi_d}\) the generalized Bell state, and \(X_d, Z_d\) the generalized Pauli operators. 
For any
\(k_1,k_2, k_3, k_4 \in \{0,\dots,d-1\}\), the following identity is true
\[
\Pi_{k_1,k_2} \cdot \Pi_{k_3,k_4} = \delta_{k_1,k_3}\delta_{k_2,k_4}\Pi_{k_1,k_2},
\]
given the notation of the puter product \(\Pi_{i,j} = (X_d^{i}Z_d^{j}\otimes I_d)\ket{\Phi_d}\bra{\Phi_d}(Z_d^{-j}X_d^{-i }\otimes I_d)\).
\end{lema}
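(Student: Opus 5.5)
Write $\ket{\Phi_{i,j}} = (X_d^{i}Z_d^{j}\otimes I_d)\ket{\Phi_d}$. Since $(X_d^iZ_d^j)^\dagger = Z_d^{-j}X_d^{-i}$, the adjoint of $\ket{\Phi_{i,j}}$ is exactly $\bra{\Phi_d}(Z_d^{-j}X_d^{-i}\otimes I_d)$. Hence $\Pi_{i,j} = \ket{\Phi_{i,j}}\bra{\Phi_{i,j}}$ is a rank-one outer product, and the whole statement reduces to an inner product already computed in Lemma~\ref{th_orthonormal_basis}.

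First I expand the product using associativity:
\begin{align*}
\Pi_{k_1,k_2}\Pi_{k_3,k_4} &= \ket{\Phi_{k_1,k_2}}\Big(\bra{\Phi_d}(Z_d^{-k_2}X_d^{-k_1}\otimes I_d)(X_d^{k_3}Z_d^{k_4}\otimes I_d)\ket{\Phi_d}\Big)\bra{\Phi_{k_3,k_4}}.
\end{align*}
The bracketed middle factor is a scalar. By Lemma~\ref{th_orthonormal_basis} it equals $\delta_{k_1,k_3}\delta_{k_2,k_4}$.

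Next I split into two cases.
\begin{itemize}
\item If $(k_1,k_2)\neq(k_3,k_4)$, the scalar vanishes and the product is the zero operator. This matches the right-hand side, since the Kronecker deltas are zero.
\item If $(k_1,k_2)=(k_3,k_4)$, the scalar is $1$ and $\bra{\Phi_{k_3,k_4}} = \bra{\Phi_{k_1,k_2}}$. The product therefore equals $\ket{\Phi_{k_1,k_2}}\bra{\Phi_{k_1,k_2}} = \Pi_{k_1,k_2}$, which is the stated identity.
\end{itemize}

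There is essentially no obstacle beyond bookkeeping. The only point needing care is that the bra in $\Pi_{k_1,k_2}$ and the ket in $\Pi_{k_3,k_4}$ appear in the order that exactly matches the operator sandwich in Lemma~\ref{th_orthonormal_basis}. That order is $Z^{-k_2}X^{-k_1}$ on the left and $X^{k_3}Z^{k_4}$ on the right, so the lemma applies verbatim, without commuting $X_d$ past $Z_d$ and without introducing extra phases $\omega^{k_1k_2}$.
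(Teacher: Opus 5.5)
Your proof is correct and takes the same route as the paper: both regroup the product so that the middle factor is the scalar $\bra{\Phi_d}(Z_d^{-k_2}X_d^{-k_1}\otimes I_d)(X_d^{k_3}Z_d^{k_4}\otimes I_d)\ket{\Phi_d}$, and both evaluate it with Lemma~\ref{th_orthonormal_basis}. Your case split is more careful than the paper's write-up: the paper drops the factor $\delta_{k_1,k_3}\delta_{k_2,k_4}$ in its final line, and it never notes that the trailing bra becomes that of $\Pi_{k_1,k_2}$ only when the deltas are nonzero.
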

\begin{proof}
We prove this Lemma by computing the LHS and observing it is equal to the RHS.
\begin{align*}
LHS &= \Big( X_d^{k_1} Z_d^{k_2} \otimes I_d \Big) \ket{\Phi_d}\bra{\Phi_d} \Big( Z_d^{-k_2} X_d^{-k_1} \otimes I_d \Big) 
\cdot\\
 &\Big( X_d^{k_3} Z_d^{k_4} \otimes I_d \Big) \ket{\Phi_d}\bra{\Phi_d} \Big( Z_d^{-k_4} X_d^{-k_3} \otimes I_d \Big)\\
 &\text{Using the result from Lemma \ref{th_orthonormal_basis}},\\
 &=\Big( X_d^{k_1} Z_d^{k_2} \otimes I_d \Big) \ket{\Phi_d}\bra{\Phi_d}\Big( Z_d^{-k_2} X_d^{-k_1} \otimes I_d \Big)\\
 &=\Pi_{k_1,k_2}
\end{align*}
\end{proof}

\begin{lema}
\label{sum_basis_Psi_k1k2}
Let \(\ket{\Phi_d}\) be the generalized Bell state, and \(X_d, Z_d\) the generalized Pauli operators. 
For any value of \(d\geq2\), the following equality holds.
\[
\sum_{m,n=0}^{d-1}(X_d^{m}Z_d^{n}\otimes I_d)\ket{\Phi_d}\bra{\Phi_d}(Z_d^{-n}X_d^{-m}\otimes I_d) = I_d^{\otimes(2)}.
\] 
\end{lema}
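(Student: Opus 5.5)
The plan is to show that the $d^2$ operators $\Pi_{m,n}$ (the notation of Lemma \ref{th_k_k1_=k}) are orthogonal rank-one projectors onto $d^2$ orthonormal vectors in $\mathcal{H}_d\otimes\mathcal{H}_d$. Since that space has dimension $d^2$, such a family is an orthonormal basis, and the completeness relation then gives $\sum_{m,n}\Pi_{m,n}=I_d^{\otimes 2}$.

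Set $\ket{\Phi_{m,n}} = (X_d^{m}Z_d^{n}\otimes I_d)\ket{\Phi_d}$. Since $(X_d^mZ_d^n)^\dagger = Z_d^{-n}X_d^{-m}$, we have $\Pi_{m,n}=\ket{\Phi_{m,n}}\bra{\Phi_{m,n}}$. Lemma \ref{th_orthonormal_basis} states exactly that $\braket{\Phi_{m,n}}{\Phi_{m',n'}} = \delta_{m,m'}\delta_{n,n'}$, so the $d^2$ vectors are orthonormal. Orthonormal vectors are linearly independent, so they span the whole $d^2$-dimensional space and form a basis.

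The operator $S=\sum_{m,n}\Pi_{m,n}$ is then the orthogonal projector onto the span of this family. By Lemma \ref{th_k_k1_=k}, $S^2=S$, and $S\ket{\Phi_{m',n'}}=\ket{\Phi_{m',n'}}$ for every basis vector. Hence $S$ acts as the identity on a basis, and $S=I_d^{\otimes 2}$.

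As an independent check avoiding the dimension argument, one can compute directly. Write $\ket{\Phi_{m,n}}=\frac{1}{\sqrt d}\sum_p \omega^{np}\ket{p\oplus_d m}\ket{p}$. Then
\[
S=\frac1d\sum_{m,n,p,q}\omega^{n(p-q)}\ket{p\oplus_d m}\bra{q\oplus_d m}\otimes\ket{p}\bra{q}.
\]
Summing over $n$ gives $d\,\delta_{p,q}$, which leaves $\sum_{m,p}\ket{p\oplus_d m}\bra{p\oplus_d m}\otimes\ket{p}\bra{p}=I_d\otimes I_d$. No step here is a real obstacle. The only care needed is the ordering $Z_d^{-n}X_d^{-m}$ in the adjoint and the geometric-sum identity $\sum_n\omega^{nr}=d\,\delta_{r,0}$, which is used modulo $d$.
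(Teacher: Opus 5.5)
Your proof is correct and follows the paper's route. The paper also rewrites the sum as $\sum_{k_1,k_2}\Pi_{k_1,k_2}$ and concludes completeness from the orthonormality in Lemma~\ref{th_orthonormal_basis}, though it leaves implicit the dimension count ($d^2$ orthonormal vectors in a $d^2$-dimensional space) that you make explicit, and your direct computation via $\sum_n\omega^{n(p-q)}=d\,\delta_{p,q}$ is a valid independent check the paper does not include.
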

\begin{proof}
We prove this Lemma by computing the LHS and observing it is equal to the RHS.
\begin{align*}
    LHS &= \sum_{m,n=0}^{d-1}(X_d^{m}Z_d^{n}\otimes I_d)\ket{\Phi_d}\bra{\Phi_d}(Z_d^{-n}X_d^{-m}\otimes I_d)\\
    &=\sum_{k_1,k_2=0}^{d-1}\Pi_{k_1,k2}\\
    &\text{Since in Lemma \ref{th_orthonormal_basis} we proved that}\\
    &\text{$X_d^{m}Z_d^{n}\ket{\Phi_d}$ is an orthonormal basis}\\
    &=I_d^{\otimes (2)}
\end{align*}
\end{proof}

\begin{lema}
\label{th_unitary_P}
For any value of
\(m\in \{0,\dots,d-1\}\), the following equality is true
\[\sum_{j=0}^{d-1}e^{-i\frac{\pi (j+m)((j+m)+d\%2)}{d}} e^{i\frac{\pi j(j+d\%2)}{d}} = d\cdot\delta_{m,0},\]
where \(d\geq2\) is fixed as the qudit dimension.
\end{lema}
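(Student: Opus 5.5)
We expand the exponent and reduce the sum to a geometric series in \(\omega\). Write \(s = d\%2\). For each \(j\), the product of the two exponentials is \(\exp\!\big(-\tfrac{i\pi}{d}\,[(j+m)(j+m+s) - j(j+s)]\big)\). Expanding the bracket gives
\[
(j+m)(j+m+s)-j(j+s) = 2jm + m^2 + ms .
\]
The quadratic term in \(j\) cancels, which is the whole point of the Chu construction. The summand therefore factors as \(e^{-i\pi m(m+s)/d}\,e^{-2\pi i jm/d} = e^{-i\pi m(m+s)/d}\,\omega^{-jm}\). The prefactor has modulus one and does not depend on \(j\), so it can be pulled out of the sum:
\[
\sum_{j=0}^{d-1}(\cdots) = e^{-i\pi m(m+s)/d}\sum_{j=0}^{d-1}\omega^{-jm}.
\]

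Next we evaluate the remaining geometric sum, which is the standard character orthogonality already used in the proof of Lemma \ref{th_swap}. Since \(m\in\{0,\dots,d-1\}\), the value \(\omega^{-m}\) equals \(1\) exactly when \(m=0\).
\begin{itemize}
\item If \(m=0\), the sum equals \(d\), and the prefactor equals \(e^{0}=1\), so the total is \(d\).
\item If \(m\neq 0\), then \(\omega^{-m}\neq 1\), and the sum equals \((1-\omega^{-md})/(1-\omega^{-m})=0\) because \(\omega^d=1\).
\end{itemize}
Combining the two cases gives \(d\,\delta_{m,0}\).

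The only point needing care is that the statement sums \(j\) over \(0,\dots,d-1\) with the literal (not reduced mod \(d\)) index \(j+m\). In the argument above no reduction of \(j+m\) modulo \(d\) is ever needed, since the cancellation is exact algebra before any periodicity is invoked, so this subtlety does not arise.

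If one later wants to use the lemma with \(j+m\) taken mod \(d\), periodicity of \(c(k)\) with period \(d\) would be required. That is where the choice \(s=d\%2\) matters: replacing \(k\) by \(k+d\) changes the exponent by \(\pi(2kd+d^2+sd)/d = 2\pi k+\pi(d+s)\), and \(d+s\) is even. I expect this periodicity remark, rather than the main computation, to be the only potential obstacle, and I would add it as a brief aside.
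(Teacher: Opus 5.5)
Your proof is correct and follows the same route as the paper. You expand the exponent so the quadratic term in $j$ cancels, pull out the $j$-independent phase $e^{-i\pi m(m+s)/d}$, and evaluate $\sum_{j}\omega^{-jm}=d\,\delta_{m,0}$; you also state explicitly that the prefactor equals $1$ at $m=0$, which the paper leaves implicit. Your periodicity aside is correct but not needed for the lemma as stated; it is, however, exactly what justifies the paper's mod-$d$ change of variable $k-j\mapsto m$ in Appendix~\ref{unitary_appendix_enc}.
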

\begin{proof}
We prove this Lemma by computing the LHS and observing it is equal to the RHS.
The exponential could be simplified as 
\begin{align*}
    E &= e^{-i\frac{\pi (j+m)((j+m)+d\%2)}{d}} e^{i\frac{\pi j(j+d\%2)}{d}}\\
    &=e ^{-i\frac{\pi}{d}(m^2+2jm +m\cdot (d\%2))}\\
    &=e^{-i\frac{\pi}{d}(m^2+m\cdot(d\%2))}e^{-i\frac{2\pi jm}{d}}.
\end{align*}
Thus, the initial sum can be written as 
\begin{align*}
\sum_{j=0}^{d-1}E &=e^{-i\frac{\pi}{d}(m^2+m\cdot(d\%2))}\sum_{j=0}^{d-1}e^{-i\frac{2\pi jm}{d}}\\
&\text{Because $\sum_{j}e^{-i\frac{2\pi jm}{d}} = d\cdot\delta_{m,0}$}\\
&=d\delta_{m,0}
\end{align*}
    
\end{proof}

\section{Unitarity of encryption operator}
\label{unitary_appendix_enc}
To prove that \(U_{enc_d}^{(n)}\) is unitary, we show that each of \(A_X\) and \(A_Z\) is unitary.
We prove \(A_X\) is unitary, by performing the notation \(\alpha_k = e^{-i\frac{\pi ik(k+d\%2)}{d}}\):
\begin{align*}
    A_XA_X^{\dagger} &= \frac{1}{d}\sum_{k,j=0}^{d-1}\alpha_{k}P_X^k \alpha_{j}^{*}P_X^{-j}\\
    &=\frac{1}{d}\sum_{k,j=0}^{d-1}\alpha_{k}\alpha_{j}^{*}P_X^{(k-j)}\\
    &\text{We perform the change of variable $k-j \mapsto m$}\\
    &=\frac{1}{d}\sum_{m=0}^{d-1}\Big(\sum_{j=0}^{d-1}\alpha_{j+m}\alpha_{j}^{*}\Big)P_X^{m}\\
    &\text{Using Lemma \ref{th_unitary_P}, we simplify}\\
    &=\frac{1}{d}\sum_{m=0}^{d-1}\delta_{m,0} P_X^m\\
    &=P_X^0 = I_d.
\end{align*}

Following the same principle, we show that \(A_Z\) is unitary, using the same notation \(\alpha_k = e^{-i\frac{\pi ik(k+d\%2)}{d}}\):
\begin{align*}
    A_ZA_Z^{\dagger} &= \frac{1}{d}\sum_{k,j=0}^{d-1}\alpha_{k}P_Z^k \alpha_{j}^{*}P_Z^{-j}\\
    &=\frac{1}{d}\sum_{k,j=0}^{d-1}\alpha_{k}\alpha_{j}^{*}P_Z^{(k-j)}\\
    &\text{We perform the change of variable $k-j \mapsto m$}\\
    &=\frac{1}{d}\sum_{m=0}^{d-1}\Big(\sum_{j=0}^{d-1}\alpha_{j+m}\alpha_{j}^{*}\Big)P_Z^{m}\\
    &\text{Using Lemma \ref{th_unitary_P}, we simplify}\\
    &=\frac{1}{d}\sum_{m=0}^{d-1}\delta_{m,0} P_Z^m\\
    &=P_Z^0 = I_d.
\end{align*}
We showed that \(A_X\), and \(A_Z\) are unitary matrices, thus their product, \(U_{enc_d}^{(n)}\)is a unitary matrix.

\section{Unitarity of decryption operator}
\label{unitary_appendix_dec}
If we recall \eqref{U_dec_mine}, the decryption operator can be written as 
\[U_{dec_d}^{(n)} = (B\cdot C)\otimes I_d^{\scriptscriptstyle \otimes (n-1)}\sum_{k,l=0}^{d-1}c_{kl}^{-1}O_{kl}\ket{\Phi_{d}}\bra{\Phi_{d}}O_{kl}^{\dagger}\otimes U_{kl},\]
where \(U_{kl} = \bigotimes_{j=2}^{n}X_d^{k}Z_d^{-l}\), \(O_{kl} = X_d^{k}Z_d^{l}\otimes I_d\), \(B = \text{SWAP}\), and \(C=(I_d \otimes I_d\otimes F^2\ \cdot (I_d\otimes\sum_{c=0}^{d-1}X_d^2 \otimes \ket{c}\bra{c})\). Since \((\text{SWAP}\cdot C)\otimes I_d^{\otimes (n-1)}\) is a unitary operator, we focus just on proving the other component, namely the operator \(A\) given as
\[A = \sum_{k,l=0}^{d-1}c_{kl}^{-1}O_{kl}\ket{\Phi_{d}}\bra{\Phi_{d}}O_{kl}^{\dagger}\otimes U_{kl}.\]
Because the operator is constructed by the summation of matrices that are obtained using tensor products of square matrices, it is obvious that the decryption operator is a square matrix. Thus, it is sufficient to show that \(AA^{\dagger} = I\).
\begin{align*}
    AA^{\dagger} &= \sum_{k,l=0}^{d-1}c_{kl}^{-1}O_{kl}\ket{\Phi_{d}}\bra{\Phi_{d}}O_{kl}^{\dagger}\otimes U_{kl}\cdot\\
    &\phantom{====}\cdot\sum_{m,n=0}^{d-1}{(c_{mn}^{-1})}^{*}O_{mn}\ket{\Phi_{d}}\bra{\Phi_{d}}O_{mn}^{\dagger}\otimes U_{mn}^{\dagger} \\
    &=\sum_{k,l,m,n=0}^{d-1}c_{kl}^{-1}{(c_{mn}^{-1})}^{*}O_{kl}\ket{\Phi_{d}}\bra{\Phi_{d}}O_{kl}^{\dagger}\cdot\\
    &\phantom{====}\cdot O_{mn}\ket{\Phi_{d}}\bra{\Phi_{d}}O_{mn}^{\dagger}\otimes U_{kl}U_{kmn}^{\dagger}\\
    &\text{Using the Lemma \ref{th_k_k1_=k},we simplify,}\\
    &=\sum_{k,l=0}^{d-1}c_{kl}^{-1}{(c_{kl}^{-1})}^{*}O_{kl}\ket{\Phi_{d}}\bra{\Phi_{d}}O_{kl}^{\dagger}\otimes U_{kl}U_{kl}^{\dagger}\\
    &\text{Since each $U_{kl} = X_d^{k}Z_d^{l}$ is unitary and $|c_{kl}|=1$,}\\
    &=\sum_{k,l=0}^{d-1}\Big(O_{kl}\ket{\Phi_{d}}\bra{\Phi_{d}}O_{kl}^{\dagger}\Big) \otimes I_d^{\otimes (n-1)}\\
    &\text{Using Lemma \ref{sum_basis_Psi_k1k2}, we find that}\\
    &=I_d^{\otimes (2)}\otimes _d^{\otimes (n-1)}
\end{align*}
We have shown that our proposed decryption operator is indeed unitary.

\section{Correctness proof}
\label{correctness}
The initial state of the quantum circuit is given as 
\begin{equation}
\ket{\psi}_{init} = \ket{\psi}_A\otimes\Big(\bigotimes_{i=1}^n\ket{\Phi_d}_{S_i,N_i}\Big).
\end{equation}
The encrypted state is obtained by applying the encryption unitary to the initial state. Thus, we obtain the state
\begin{align}  
    \ket{\psi}_{enc} &= U_{enc_{d}}^{(n)} \otimes I_d^{\otimes(n)}\ket{\psi}_{init} \notag\\
    &=\frac{1}{d}\sum_{k,l=0}^{d-1}c_{kl}\Big({X_d^{\scriptscriptstyle(A)}}^{k}{Z_d^{\scriptscriptstyle(A)}}^{l} \Big)\ket{\psi}_A \otimes \\
    &\phantom{===}\otimes\bigotimes_{j=1}^{n} \Big({X_d^{\scriptscriptstyle(S_i)}}^{k}{Z_d^{\scriptscriptstyle(S_i)}}^{l} \otimes I_d^{\scriptscriptstyle(N_i)}\Big)\ket{\Phi_d}\notag.
\end{align}

The encryption operator should produce a quantum state that does not leak any information regarding the original \(\ket{\psi}_A\) to any external party that might have access to any of the \(\ket{\psi}_{S_i}\). To show that the proposed unitary achieves this goal, we compute the partial trace on the \(S_1\) subsystem, and show that the result is the maximally mixed state \(\frac{1}{d}I_d\). We are tracing out the density matrix \(\rho_{enc} = \ket{\psi}_{enc}\bra{\psi}_{enc}\). We can express the density matrix of the state as
\begin{align}
    \rho_{enc} &= \frac{1}{d^2}\sum_{k,l,m,n=0}^{d-1}c_{kl}c_{mn}^{*}A_{kl}\ket{\psi}\bra{\psi}A_{mn}^{\dagger} \otimes \notag\\
    &\phantom{====}\otimes\bigotimes_{j=1}^{n}(S_{kl}\otimes I_d) \ket{\Phi_d}\bra{\Phi_d}(S_{mn}^{\dagger}\otimes I_d),
\end{align}
using the notations \(A_{kl} = {X_d^{\scriptscriptstyle(A)}}^{k}{Z_d^{\scriptscriptstyle(A)}}^{l}\), and \(S_{kl} = {X_d^{\scriptscriptstyle(S_i)}}^{k}{Z_d^{\scriptscriptstyle(S_i)}}^{l}\). The partial trace can be calculated as
\begin{align}
    \rho_{S_1} &= \text{Tr}_{\overline{S_1}}\rho_{enc},
\end{align}
where \(\overline{S_1} = A,S_2,\dots,S_n,N_1,\dots,N_n\). One intermediary step is to take the partial trace over all of \(S_i, N_i, i\in\{2,\dots,n\}\). Thus, the result is presented as follows.
\begin{align}
\label{rho_interm}
    \rho_{A,S_1,N_1} &= \text{Tr}_{S_2,\dots,S_n,N_2,\dots,N_n}(\rho_{enc}) \notag\\
    &=\frac{1}{d^2}\cdot\sum_{k,l,m,n=0}^{d-1}c_{kl}c_{mn}^{*}A_{kl}\ket{\psi}_A\bra{\psi}_AA_{mn}^{\dagger}\notag\otimes\\ 
    &\phantom{=}(S_{kl}\otimes I_d) \ket{\Phi_d}_{\scriptscriptstyle S_1,N_1}\bra{\Phi_d}_{\scriptscriptstyle S_1,N_1}(S_{mn}^{\dagger}\otimes I_d) \otimes\notag\\
    &\text{Tr}\Big(\bigotimes_{i=2}^{n}(S_{kl}\otimes I_d)\ket{\Phi_d}_{\scriptscriptstyle S_i,N_i}\bra{\Phi_d}_{\scriptscriptstyle S_i, N_i}(S_{mn}^{\dagger}\otimes I_d)\Big)\notag\\
    &\phantom{=}\text{Using the linearity of the trace and }\notag\\
    &\phantom{=}\text{the result from \Cref{trace_gen_bell},}\notag\\
    &=\frac{1}{d^2}\cdot\sum_{k,l=0}^{d-1}|c_{kl}|^2A_{kl}\ket{\psi}_A\bra{\psi}_AA_{kl}^{\dagger}\otimes\notag\\
    &\phantom{=}(S_{kl}\otimes I_d) \ket{\Phi_d}_{\scriptscriptstyle S_1,N_1}\bra{\Phi_d}_{\scriptscriptstyle S_1,N_1}(S_{kl}^{\dagger}\otimes I_d) \cdot 1
\end{align}
Taking the result of \(\rho_{A,S_1,N_1}\) from \eqref{rho_interm}, we can calculate the \(\rho_{S_1}\), by taking the partial trace over the subsystems\(A\) and \(N_1\). Using the result from \Cref{th_Tr_Ni}, and the properties of the trace, we obtain
\begin{align}
    \rho_{S_1} &= \frac{1}{d^2}\sum_{k,l=0}^{d-1}|c_{kl}|^2\cdot \text{Tr}\Big(A_{kl}\ket{\psi}_A\bra{\psi}_AA_{kl}^{\dagger}\Big)\cdot \notag \\
    &\phantom{=}\text{Tr}_{N_1}\Big((S_{kl}\otimes I_d) \ket{\Phi_d}_{S_1,N_1}\bra{\Phi_d}_{S_1,N_1}(S_{kl}^{\dagger}\otimes I_d)\Big)\cdot 1\notag\\
    &=\frac{1}{d^2}\sum_{k,l=0}^{d-1}1\cdot\text{Tr}\Big(\bra{\psi}_AA_{kl}^{\dagger}A_{kl}\ket{\psi}_A\Big)\cdot\frac{1}{d}S_{kl}S_{kl}^{\dagger}\cdot1\notag\\
    &=\frac{1}{d^2}\sum_{k,l=0}^{d-1}1\cdot1\cdot\frac{1}{d}I_d\cdot1\notag\\
    &=\frac{1}{d}I_d.
\end{align}
We proved that the encryption matrix we proposed indeed generates an encrypted state. We did not impose additional conditions on the \(S_1\) subsystem, which means that the same result will be obtained for any \(\rho_{S_i}\).

The decrypted state is obtained by applying the unitary matrix proposed for the decryption operation to the encrypted state. Thus, we obtain
\begin{align}
\label{eq_state_dec}
\ket{\psi}_{dec} &= U_{dec_d}^{(n)}\ket{\psi}_{enc} \notag\\
&=\sum_{k,l=0}^{d-1}c_{kl}^{-1}\Gamma^{\scriptscriptstyle (S_1,N_1)}O_{kl}^{\scriptscriptstyle(S_1N_1)}\ketbra{\Phi_d}{O_{kl}^{\scriptscriptstyle(S_1N_1)}}^{\scriptscriptstyle\dagger}
 \notag\\
&\phantom{=}\otimes\bigotimes_{j=2}^{n}\Big({X_d^{\scriptscriptstyle(N_j)}}^{k}{Z_d^{\scriptscriptstyle(N_j)}}^{-l}\Big)\cdot \notag\\
&\phantom{=}\cdot\frac{1}{d}\sum_{m,n=0}^{d-1}c_{mn}A_{mn}\ket{\psi}_A \bigotimes_{j=1}^{n} O_{mn}^{\scriptscriptstyle(S_j,N_j)}\ket{\Phi_d},
\end{align}
where we have used the notations \(O_{kl}^{\scriptscriptstyle(S_1,N_1)}=({X_d^{\scriptscriptstyle(S_1)}}^{k}{Z_d^{\scriptscriptstyle(S_1)}}^{l}\otimes {I_d}^{\scriptscriptstyle(N_1)})\), \(A_{kl} = {X_d^{\scriptscriptstyle(A)}}^{k}{Z_d^{\scriptscriptstyle(A)}}^{l} \), and \(\Gamma^{\scriptscriptstyle(S_1,N_1)} = \text{SWAP}^{\scriptscriptstyle(S_1,N_1)}\cdot C^{\scriptscriptstyle(S_1,N_1)}\). To show that the decryption unitary correctly retrieves in \(\ket{\psi}_{S1}\) the information originally stored in the data qudit \(\ket{\psi}_A\), we simply perform a step-by-step evolution of the state in \eqref{eq_state_dec}. Thus, we calculate and obtain the following result.
\begin{align}
    \ket{\psi}_{dec}&=\frac{1}{d}\sum_{k,l=0}^{d-1}\sum_{m,n=0}^{d-1}c_{kl}^{-1}c_{mn}\cdot\Gamma^{\scriptscriptstyle(S_1,N_1)}A_{mn}\ket{\psi}_A\otimes \notag\\
    &\phantom{==}O_{kl}^{\scriptscriptstyle(S_1,N_1)}\ket{\Phi_d}\bra{\Phi_d}{O_{kl}^{\scriptscriptstyle(S_1,N_1)}}^{\dagger}O_{mn}^{\scriptscriptstyle(S_1,N_1)}\ket{\Phi_d} \otimes\notag\\
    &\phantom{==}\bigotimes_{j=2}^{n}\Big({X_d^{\scriptscriptstyle(N_j)}}^{k}{Z_d^{\scriptscriptstyle(N_j)}}^{-l}{X_d^{\scriptscriptstyle(S_i)}}^{m}{Z_d^{\scriptscriptstyle(S_i)}}^{n}\Big)\ket{\Phi_d}\notag \\
    &\phantom{=}\text{Using the result from Lemma \ref{th_orthonormal_basis}},\notag\\
    &=\frac{1}{d}\sum_{k,l=0}^{d-1}|c_{kl}|^2\text{SWAP}^{\scriptscriptstyle(S_1,N_1)}A_{kl}\ket{\psi}_A O_{kl}^{\scriptscriptstyle(S_1,N_1)}\ket{\Phi_d} \notag \\
    &\phantom{==}\bigotimes_{j=2}^{n}\Big({X_d^{\scriptscriptstyle(N_j)}}^{k}{Z_d^{\scriptscriptstyle(N_j)}}^{-l}{X_d^{\scriptscriptstyle(S_i)}}^{k}{Z_d^{\scriptscriptstyle(S_i)}}^{l}\Big)\ket{\Phi_d} \notag\\
    &\phantom{=}\text{Using the result from Theorem \ref{th_2}},\notag\\
    &=\Gamma^{\scriptscriptstyle(S_1,N_1)}\frac{1}{d}\sum_{k,l=0}^{d-1}A_{kl}\ket{\psi}_A O_{kl}^{\scriptscriptstyle(S_1,N_1)}\ket{\Phi_d} \notag \\
    &\phantom{====}\otimes \bigotimes_{j=2}^{n}\ket{\Phi_d} \notag \\
    &\phantom{=}\text{We use the property from Lemma \ref{th_swap}}, \notag \\
    &=\text{SWAP}^{\scriptscriptstyle(S_1,N_1)}\frac{1}{\sqrt{d}}\sum_{p=0}^{d-1}\ket{p}_A\ket{p}_{S_1}\ket{\psi}_{N_1} \otimes \bigotimes_{j=2}^{n}\ket{\Phi_d}\notag \\
    &=\frac{1}{\sqrt{d}}\sum_{p=0}^{d-1}\ket{p}_A\ket{\psi}_{S_1}\ket{p}_{N_1} \otimes \bigotimes_{j=2}^{n}\ket{\Phi_d}.
\end{align}
We showed that the information stored in qudit \(A\) is transferred now to \(S_1\), as needed, without imposing any supplementary restriction on the other \(S_i\).

\section{Proof of the theorems }
\subsection{Proof of the Theorem \ref{th_2}}
\label{proof_th2}
Let \(\ket{\Phi_d}\) be  the generalized Bell state and \(X_d, Z_d\) be the generalized Pauli operators. In this Appendix, we provide the derivation for the qudit identity
\begin{equation}
    \left( {X_d^{\scriptscriptstyle(A)}}^{k_1}{Z_d^{\scriptscriptstyle(A)}}^{-k_2} \otimes {X_d^{\scriptscriptstyle(B)}}^{k_1}{Z_d^{\scriptscriptstyle(B)}}^{k_2} \right) \ket{\Phi_d} = \ket{\Phi_d},
\end{equation}
for any dimension $d \geq 2$ and $k_1, k_2 \in \{0, \dots, d-1\}$.
\begin{proof} 
We prove the theorem by fixing one of the operators, constructing the other, and showing that we obtain the same result as in the theorem statement. Specifically, we need to calculate the \(O_{A}\), and \(O_{B}\), such that 
\begin{align*}
    \Big(O_A\otimes {X_d^{k_1}}{Z_d^{k_2}}\Big)\ket{\Phi_d} &=\ket{\Phi_d}, \text{and}\\
    \Big({X_d^{k_1}}{Z_d^{-k_2}} \otimes O_{B}\Big) \ket{\Phi_d} &= \ket{\Phi_d}.
\end{align*}
For the calculation of \(O_A\), we consider that there is no operator applied on the \((A)\) subsystem. We observe the effect of the gates on the \((B)\) subsystem, and then we compute the gates that cancel that effect.
\begin{align*}
    I_d \otimes X_d^{k_1}Z_d^{k_2}\ket{\Phi_d} &= \frac{1}{\sqrt{d}}\sum_{p=0}^{d-1}\ket{p}X_d^{k_1}Z_d^{k_2}\ket{p}\\
    &= \frac{1}{\sqrt{d}}\sum_{p=0}^{d-1}\ket{p}\omega^{k_2p}\ket{p\oplus_d k_1}\\
    &\text{We perform : $m\mapsto p\oplus_dk_1$}\\
    &= \frac{1}{\sqrt{d}} \sum_{m=0}^{d-1}\omega^{(m -k_1)k_2}\ket{m\ominus_dk_1}\ket{m}
\end{align*}
Thus, we must find the operator that performs the following mapping \(
O_A (\omega^{m-k_1)k_2}\ket{m\ominus_dk_1})= \ket{m}.
\)
We can easily find that this operator is indeed \(O_A = X_d^{k_1}Z_d^{\scriptscriptstyle -k_2}\) because 
\begin{align*}
    X_d^{k_1}Z_d^{-k_2}\omega^{(m-k_1)k_2}\ket{m\ominus_dk_1} &= X_d^{k_1}\ket{m\ominus_dk_1} = \ket{m}
\end{align*}
For the calculation of \(O_B\), we consider that there is no operator applied to the \((B)\) subsystem. We observe the effect of the gates applied to the \((A)\) subsystem, and then we compute the gates that cancel that effect.
\begin{align*}
    X_d^{\scriptscriptstyle k_1}Z_d^{\scriptscriptstyle {-k_2}} \otimes I_d\ket{\Phi_d} &= \frac{1}{\sqrt{d}}\sum_{p=0}^{d-1}X_d^{\scriptscriptstyle k_1}Z_d^{\scriptscriptstyle-k_2}\ket{p}\ket{p}\\
    &= \frac{1}{\sqrt{d}}\sum_{p=0}^{d-1}\omega^{\scriptscriptstyle-k_2p}\ket{p\oplus_dk_1}\ket{p}\\
    &\text{We perform : $m\mapsto p\oplus_dk_1$}\\
    &= \frac{1}{\sqrt{d}} \sum_{m=0}^{d-1}\omega^{\scriptscriptstyle -(m-k_1)k_2}\ket{m}\ket{m\ominus_dk_1}
\end{align*}
Thus, we must find the operator that performs the following mapping 
\(
O_B (\omega^{-(m-k_1)k_2}\ket{m\ominus_dk_1})= \ket{m}.
\)
We can easily find that this operator is indeed \(O_B = X_d^{k_1}Z_d^{k_2}\) because 
\[
X_d^{k_1}Z_d^{k_2}\omega^{-(m-k_1)k_2}\ket{m-k_1} = X_d^{k_1}\ket{m-k_1} = \ket{m}.
\]
Considering that using the constructions presented above, we derive the form of the operators in the theorem statement, we have proved that the construction holds \(\forall (k_1,k_2) \in \{0,\dots,d-1\}^2\).
\end{proof}

\subsection{Proof of the Theorem \ref{th_Tr_Ni}}
\label{proof_th_Tr_Ni}
Let \(\ket{\Phi_d}\) be  the generalized Bell state, \(O_1, O_2\) be two general one-qudit operators and \(I_d^{\scriptscriptstyle(B)}\) be the identity applied on the subsystem B.
We prove the identity 
\begin{equation}
    \text{Tr}_B\Big((O_1\otimes I_d^{\scriptscriptstyle(B)})\ket{\Phi_d}_{A,B}\bra{\Phi_d}_{A,B}(O_2^{\dagger }\otimes I_d^{\scriptscriptstyle(B)})\Big) = \frac{1}{d}O_1O_2^{\dagger}.
\end{equation}
\begin{proof}
\begin{align*}
    LHS &= \frac{1}{d}\sum_{l, p_1,p_2=0}^{d-1}(I_d^{\scriptscriptstyle(A)} \bra{l}_B)(O_{1} \otimes I_d^{\scriptscriptstyle(B)})\left(\ket{p_1}_A\bra{p_2}_A\right)\cdot\\&\phantom{=======}\cdot\left(\ket{p_1}_B\bra{p_2}_B \right)(O_{2}^{\dagger} \otimes I_d^{\scriptscriptstyle(B)})(I_d^{\scriptscriptstyle(A)} \ket{l}_B)\\
    &=\frac{1}{d}\sum_{l,p_1,p_2=0}^{d-1}O_{1}\ket{p_1}_A\bra{p_2}_AO_{2}^{\dagger}\braket{l}{p_1}_B\braket{p_2}{p_1}_B\\
    &=\frac{1}{d}\sum_{l=0}^{d-1}O_{1}\ket{l}_A\bra{l}_AO_{2}^{\dagger}\\
    &=\frac{1}{d}O_1O_2^{\dagger}
\end{align*}
\end{proof}

\subsection{Proof of the  Theorem \ref{trace_gen_bell}}
\label{proof_trace_gen_bell}
Let \(\ket{\Phi_d}\) be  the generalized Bell state and \(X_d, Z_d\) be the generalized Pauli operators.
We prove the identity
\begin{equation}
    \text{Tr}\Big((X_d^{\scriptscriptstyle k}Z_d^{\scriptscriptstyle l}\otimes I_d)\ket{\Phi_d}\bra{\Phi_d}(Z_d^{\scriptscriptstyle -n}X_d^{\scriptscriptstyle -m}\otimes I_d)\Big) = \delta_{k,m}\delta_{l,n},
\end{equation}
for any \(m,n\in\{0,\dots,d-1\}\).
\begin{proof} 
\begin{align*}
    LHS &=\text{Tr}\Big((X_d^{k}Z_d^{l}\otimes I_d)\ket{\Phi_d}\bra{\Phi_d})(Z_d^{-n}X_d^{m}\otimes I_d)\Big)\\
    &\text{Using the permutation property of the trace}\\
    &=\text{Tr}\Big(\bra{\Phi_d}(Z_d^{-n}X_d^{-m}\otimes I_d)(X_d^{k}Z_d^{l}\otimes I_d)\ket{\Phi_d}\Big)\\
    &\text{Using the result from Lemma \ref{th_orthonormal_basis},}\\
    &=\text{Tr}\Big(\delta_{k,m}\delta_{l,n}\Big)=1\cdot\delta_{k,m}\delta_{l,n}
\end{align*}
\end{proof}

\bibliography{apssamp}

\end{document}